\documentclass[a4paper,USenglish,cleveref,autoref,thm-restate]{lipics-v2021}

\title{QPTAS for MWIS and finding large sparse induced subgraphs in graphs with few independent long holes}
\titlerunning{QPTAS for MWIS and stuff in graphs with few independent long holes}

\author{\'{E}douard Bonnet}{Univ Lyon, CNRS, ENS de Lyon, Université Claude Bernard Lyon 1, LIP UMR5668, France \and \url{http://perso.ens-lyon.fr/edouard.bonnet/}}{edouard.bonnet@ens-lyon.fr}{https://orcid.org/0000-0002-1653-5822}{}
\author{Jadwiga Czyżewska}{University of Warsaw, Poland}{j.czyzewska@mimuw.edu.pl}{https://orcid.org/0009-0006-0764-9613}{Supported by Polish National Science Centre SONATA BIS-12 grant number 2022/46/E/ST6/00143.}
\author{Tomáš Masařík}{University of Warsaw, Poland}{masarik@mimuw.edu.pl}{https://orcid.org/0000-0001-8524-4036}{Supported by the Polish National Science Centre SONATA-17 grant number
2021/43/D/ST6/03312.}
\author{Marcin Pilipczuk}{University of Warsaw, Poland}{}{https://orcid.org/0000-0001-5680-7397}{Supported by Polish National Science Centre SONATA BIS-12 grant number 2022/46/E/ST6/00143.}
\author{Paweł Rzążewski}{Warsaw University of Technology \& University of Warsaw, Poland}{}{https://orcid.org/0000-0001-7696-3848}{Supported by the National Science Centre grant 2024/54/E/ST6/00094.}

\authorrunning{\'E. Bonnet, J. Czyżewska, T. Masařík, M. Pilipczuk, P. Rzążewski}

\Copyright{\'Edouard Bonnet, Jadwiga Czyżewska, T. Masařík, M. Pilipczuk, P. Rzążewski}

\category{}

\relatedversion{}

\supplement{}

\acknowledgements{
The project was initiated at the Structural Graph Theory workshop STWOR in September 2023.
This workshop was a part of STRUG: Structural Graph Theory Bootcamp, funded by the ``Excellence initiative -- research university (2020--2026)'' of University of Warsaw.
We are grateful to Benjamin Bergougnoux, Linda Cook, and Marek Sokołowski for stimulating discussions on the topic.
}

\nolinenumbers 

\hideLIPIcs  

\usepackage[utf8]{inputenc}

\usepackage[T1]{fontenc}
\usepackage{lmodern}

\usepackage[colorinlistoftodos,bordercolor=orange,backgroundcolor=orange!20,linecolor=orange,textsize=normalsize]{todonotes}

\usepackage{amsmath}
\usepackage{mathtools}
\usepackage{amssymb}
\usepackage{amsthm}
\usepackage{bbm}
\usepackage{accents}
\usepackage{booktabs}
\usepackage{fixmath}
\usepackage[noadjust,compress]{cite}

\newtheorem{conj}{Conjecture}

\newtheorem{problem}{Problem}

\usepackage{xspace}
\usepackage{tikz}

\usetikzlibrary{fit}
\usetikzlibrary{arrows}
\usetikzlibrary{patterns}
\usetikzlibrary{calc}
\usetikzlibrary{shapes}
\usetikzlibrary{positioning}
\usetikzlibrary{math}
\usetikzlibrary{shapes.symbols}
\usetikzlibrary{decorations.pathreplacing,calligraphy}
\usepackage[scr=boondox,scrscaled=1.05]{mathalfa}

\newcommand{\MIS}{\textsc{MIS}\xspace}
\newcommand{\MWIS}{\textsc{MWIS}\xspace}

\renewcommand{\geq}{\geqslant}
\renewcommand{\leq}{\leqslant}
\newcommand{\wei}{\mathsf{w}}
\newcommand{\Oh}{\mathcal{O}}
\newcommand{\cL}{\mathcal{L}}
\newcommand{\cH}{\mathcal{H}}

\newcommand{\cX}{\mathcal{X}}
\newcommand{\cY}{\mathcal{Y}}
\newcommand{\cS}{\mathcal{S}}

\newcommand{\RR}{\mathbb{R}}
\newcommand{\NN}{\mathbb{N}}
\newcommand{\NP}{\textsf{NP}\xspace}

\usepackage{thmtools}
\usepackage{thm-restate}

\newcommand\tw{\textsf{tw}}

\usepackage{todonotes}

\usepackage{framed}
\usepackage{tabularx}

\newlength{\RoundedBoxWidth}
\newsavebox{\GrayRoundedBox}
\newenvironment{GrayBox}[1]%
   {\setlength{\RoundedBoxWidth}{.93\columnwidth}
    \def\boxheading{#1}
    \begin{lrbox}{\GrayRoundedBox}
       \begin{minipage}{\RoundedBoxWidth}}%
   {   \end{minipage}
    \end{lrbox}
    \begin{center}
    \begin{tikzpicture}%
       \node(Text)[draw=black!20,fill=white,rounded corners,inner sep=2ex,text width=\RoundedBoxWidth]
             {\usebox{\GrayRoundedBox}};
        \coordinate(x) at (current bounding box.north west);
        \node [draw=white,rectangle,inner sep=3pt,anchor=north west,fill=white]
        at ($(x)+(6pt,.75em)$) {\boxheading};
    \end{tikzpicture}
    \end{center}}

\newenvironment{defproblemx}[1]{\noindent\ignorespaces%
                                \FrameSep=6pt%
                                \parindent=0pt%
                \begin{GrayBox}{#1}%
                \begin{tabular*}{\columnwidth}{!{\extracolsep{\fill}}@{\hspace{.1em}} >{\itshape} p{1.5cm} p{0.86\columnwidth} @{}}%
            }{
                \end{tabular*}%
                \end{GrayBox}%
                \ignorespacesafterend
            }

\newcommand{\problemTask}[3]{%
  \begin{defproblemx}{#1}
    Input: & #2 \\
    Task: & #3
  \end{defproblemx}
}

\begin{document}

\maketitle

\begin{abstract}
We present a quasipolynomial-time approximation scheme (QPTAS) for the \textsc{Maximum Independent Set} (\textsc{MWIS}) in graphs with a bounded number of pairwise vertex-disjoint and non-adjacent long induced cycles.
More formally, for every fixed $s$ and $t$, we show a QPTAS for \textsc{MWIS} in graphs that exclude $sC_t$ as an induced minor.
Combining this with known results, we obtain a QPTAS for the problem of finding a largest induced subgraph of bounded treewidth with given hereditary property definable in Counting Monadic Second Order Logic, in the same classes of graphs.

This is a step towards a conjecture of Gartland and Lokshtanov which asserts that for any planar graph $H$,
graphs that exclude $H$ as an induced minor admit a polynomial-time algorithm for the latter problem.
This conjecture is notoriously open and even its weaker variants are confirmed only for very restricted graphs $H$.
\end{abstract}

\section{Introduction}\label{sec:intro}

In the \textsc{Max Independent Set} (\MIS) problem, one is asked, given a graph $G$, for a~largest \emph{independent set}, i.e., a~set of pairwise nonadjacent vertices in $G$.
In the weighted variant of the problem, \textsc{Max Weight Independent Set} (\MWIS), the input graph has vertex weights and we ask for an independent set of maximum weight.
\MIS (and thus \MWIS) is a ``canonical'' hard problem---it is one of Karp's 21 \NP-hard problems~\cite{Karp1972}, \textsf{W[1]}-hard (with respect to the solution size)~\cite{platypus} and notoriously hard to approximate~\cite{Hastad96cliqueis}, even in the parameterized setting~\cite{DBLP:journals/siamcomp/ChalermsookCKLM20,DBLP:conf/focs/LinRSW23}.
Thus, a natural question to ask when dealing with such a hard problem is as follows: 
\emph{For what classes of input graphs is the \MWIS problem tractable?}

Typically, this question is studied for \emph{hereditary} graph classes, i.e., classes closed under vertex deletion. Each such class can be equivalently characterized by specifying minimal induced subgraphs that do not belong to the class. For graphs $G$ and $H$, we say that $G$ is \emph{$H$-free} if it does not contain $H$ as an induced subgraph. 

The complexity study of \MIS and \MWIS in restricted graph classes is among the most active research directions in algorithmic graph theory. Let us list some relevant results, focusing on the case of $H$-free graphs, for a single graph $H$.
Already in the early 1980s, Alekseev~\cite{alekseev1982effect} observed that classic \NP-hardness reductions imply that \MIS remains \NP-hard in $H$-free graphs for many graphs $H$.
First, \MIS is \NP-hard in subcubic graphs, and thus in $H$-free graphs whenever $H$ has a vertex of degree at least 4~\cite{GAREY1976237}.
The second reduction involves the so-called \emph{Poljak's subdivision trick}~\cite{Po74}: 
subdividing an edge of a graph twice yields a new graph where the size of a maximum independent set increases by exactly one.
Consequently, for a fixed graph $H$, we can start with an arbitrary instance of  \MIS and subdivide each edge $2|V(H)|$ times.
This way we obtain an equivalent instance that is $H$-free, whenever $H$ has a cycle or two vertices of degree at least three in the same connected~component.

Combining these two observations, we obtain that \MIS (and thus, \MWIS) remains \NP-hard in $H$-free graphs, unless every component of $H$ is a subcubic tree with at most one vertex of degree 3; let us call the family of such forests $\mathcal{S}$. Let us highlight that in particular $\mathcal{S}$ contains all paths and, more generally, linear forests (i.e., forests of paths).

We do not know any \NP-hardness result for \MWIS in $H$-free graphs when $H \in \mathcal{S}$.
On the other hand, polynomial-time algorithms are known only for small graphs $H \in \mathcal{S}$~\cite{ALEKSEEV20043,DBLP:journals/jda/LozinM08,MINTY1980284,SBIHI198053,DBLP:conf/soda/LokshantovVV14,grzesik2020polynomialtime,DBLP:journals/dam/BrandstadtM18a}.
However, general belief in the community is that all cases not excluded by standard \NP-hardness reductions mentioned above \MWIS is polynomial-time-solvable.
This belief is supported by the existence of a \emph{quasipolynomial-time} algorithm.
Indeed, for every $H \in \mathcal{S}$, the \MWIS problem restricted to $n$-vertex $H$-free graphs can be solved in time $n^{\Oh(\log^{16}n)}$~\cite{DBLP:conf/stoc/GartlandLMPPR24}.
Note that this is a strong indication that no $H \in \mathcal{S}$ defines an \NP-hard case of \MWIS, as otherwise every problem in \NP can be solved in quasipolynomial time.

Let us have a closer look at the case of $P_t$-free graphs, i.e., graphs that exclude a $t$-vertex path as an induced subgraph.
It turns out that in these classes we cannot only solve \MWIS in quasipolynomial time~\cite{DBLP:conf/focs/GartlandL20,DBLP:conf/sosa/PilipczukPR21} (and in polynomial time for $t \leq 6$~\cite{DBLP:conf/soda/LokshantovVV14,grzesik2020polynomialtime}), but the algorithms can actually be extended to a rich family of problems defined as follows~\cite{Tara,DBLP:conf/soda/ChudnovskyMPPR24,DBLP:conf/stoc/GartlandLPPR21}.
For an integer $r$ and a \textsf{CMSO}$_2$ sentence\footnote{\textsf{CMSO}$_2$ is a logic where one can use vertex, edge, and (vertex or edge) set variables, check vertex-edge incidence, quantify over variables, and apply counting predicates modulo fixed integers. See~\cite{platypus} for a formal introduction.} $\psi$, we define the $(\tw \leq r,\psi)$-\MWIS problem as follows (here, \MWIS stands for \emph{maximum-weight induced subgraph}).

\problemTask{$(\tw \leq r,\psi)$-\MWIS}%
{A graph $G$ equipped with a weight function $\wei\colon V(G) \to \RR_{\geq 0}$.}%
{Find a set $S \subseteq V(G)$, such that
\begin{itemize}
\item $G[S] \models \psi$,
\item $\tw(G[S]) \leq r$, and
\item $S$ is of maximum weight subject to the conditions above,
\end{itemize}
or conclude that no such set exists.}

Notable special cases of the $(\tw \leq r,\psi)$-\MWIS problem are \MWIS, \textsc{Feedback Vertex Set} (equivalently, \textsc{Max Induced Forest}), and \textsc{Even Cycle Transversal}  (equivalently, \textsc{Max Induced Odd Cactus}).  

Interestingly, the quasipolynomial-time algorithm for $(\tw \leq r,\psi)$-\MWIS in $P_t$-free graphs can be extended to the class of $C_{\geq t}$-free graphs: ones that do not contain an induced cycle with at least $t$ vertices.
Note that every $P_{t}$-free graph is $C_{\geq t+1}$-free.
We emphasize that $C_{\geq t}$-free graphs form the class defined by an infinite minimal family of forbidden induced subgraphs.
Furthermore, we cannot hope for tractability of $(\tw \leq r,\psi)$-\MWIS in $H$-free graphs, when $H$ is a single graph other than a linear forest.
Indeed, recall that if $H \notin \cS$, then already \MWIS ($r=0$) is \NP-hard in $H$-free graphs.
On the other hand, if $H \in \cS$ but is not a linear forest, then it must contain the \emph{claw}---the three-leaf star,
and \textsc{Max Induced Forest} ($r=1$) is \NP-hard in the class of claw-free graphs~\cite{DBLP:journals/dm/Munaro17}.

This bring us to a question:
What is the crucial property of $P_t$-free graphs that also extends to $C_{\geq t}$-free graphs, but not to $H$-free graphs for any fixed $H$ that is not a linear forest, and allows us to solve $(\tw \leq r,\psi)$-\MWIS efficiently?

It occurs that we should be looking at these classes from a different angle.
For graph $G$ and $H$, we say that $H$ is an \emph{induced minor} of $G$ if it can be obtained from $G$ by deleting vertices and contracting edges.
Equivalently, this means that there is an \emph{induced minor model} of $H$ in $G$: a collection of $|V(H)|$ pairwise disjoint subsets of $V(G)$, each inducing a~connected graph, and a bijection that maps these sets the vertices of $H$ so that there is an edge between two sets if and only if their corresponding vertices are adjacent in $H$.
We say that $G$ is \emph{$H$-induced-minor-free} if it does not contain $H$ as an induced minor.
Note that every $H$-induced-minor-free graph is in particular $H$-free. However, if $H$ is a linear forest, then $H$-free graphs and $H$-induced-minor-free graphs coincide. Furthermore, $C_{\geq t}$-free graphs are precisely $C_t$-induced-minor-free graphs.

The following conjecture asserts that the tractability of  $(\tw \leq r,\psi)$-\MWIS actually extends to all classes excluding a fixed planar graph as an induced minor.

\begin{conj}[Gartland, Lokshtanov~\cite{GartlandThesis}]\label{conj:alg}
For every planar graph $H$, every problem expressible as $(\tw \leq r,\psi)$-\MWIS is polynomial-time-solvable in $H$-induced-minor-free graphs.
\end{conj}

Let us emphasize that the assumption that we exclude a planar graph is crucial in \cref{conj:alg}.
Indeed, the class of planar graphs excludes any non-planar graph as an induced minor and \MWIS is \NP-hard in planar graphs~\cite{GAREY1976237}.

We are still very far from confirming \cref{conj:alg} in full generality.
Indeed, even its weakening asking for a quasipolynomial-time algorithm, or even a quasipolynomial-time approximation scheme (QPTAS) seems challenging.
Still, in recent years some special cases of \cref{conj:alg} were shown~\cite{DBLP:conf/soda/BonamyBDEGHTW23,DBLP:journals/algorithmica/BonnetDGTW26,dallard2022secondpaper}.
In the context of the current paper, the most relevant one seems to be the result of Bonnet et al.~\cite{DBLP:journals/algorithmica/BonnetDGTW26} that \MWIS can be solved in quasipolynomial time in the class of $(C_4 + sC_3)$-induced-minor-free graphs, for every fixed $s$.
Here, by ``$+$'' we mean disjoint union and multiplication by $s$ means $s$-fold disjoint union.
Thus, $C_4 + sC_3$ is the graph with $s+1$ components: one being a $C_4$ and the remaining ones being triangles.

Thus, in quasipolynomial time we can solve \MWIS (and even $(\tw \leq r,\psi)$-\MWIS) in graphs that exclude \emph{long} induced cycles (i.e., $C_{\geq t}$-free) and in graphs that exclude \emph{many} induced and pairwise non-adjacent cycles (i.e., $(C_4 + sC_3)$-induced-minor-free graphs).
In this paper, we are interested in the common generalization of these classes: graphs that exclude \emph{many} induced pairwise non-adjacent \emph{long} cycles, i.e., the class of $sC_t$-induced-minor-free graphs, for fixed $s$ and $t$.

While we are not able to prove \cref{conj:alg} for $sC_t$-induced-minor-free graphs, we provide a major step towards such a result.
As the main technical contribution, we show a~QPTAS for \MWIS in the considered classes of graphs.

\begin{restatable}{theorem}{thmtCt}\label{thm:mwis}
Let $s,t$ be positive integers and $\varepsilon \in (0,1)$ be a real.
There is an algorithm that, given a vertex-weighted graph $G$, in quasipolynomial time returns either:
\begin{itemize}
\item an induced minor model of $sC_t$ in $G$, or
\item an independent set of weight at least $(1-\varepsilon)$ times the maximum possible weight of an independent set in $G$.
\end{itemize}
\end{restatable}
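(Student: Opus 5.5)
We would prove \cref{thm:mwis} by induction on $s$, combining a branching/separator argument for $C_t$-induced-minor-free graphs (the case $s=1$, where quasipolynomial exact algorithms are known from the $C_{\geq t}$-free literature) with a mechanism that loses only an $\varepsilon$-fraction of the weight. The rough picture: if $G$ has no induced minor model of $C_t$, then $G$ is $C_{\geq t}$-free and we solve \MWIS exactly in quasipolynomial time. Otherwise we find a ``long hole'' $C$ (an induced cycle of length at least $t$, found greedily or by a shortest-path argument). The key observation is that the graph $G - N[C]$ must be $(s-1)C_t$-induced-minor-free: an induced minor model of $(s-1)C_t$ there, together with $C$ itself, gives an induced minor model of $sC_t$ in $G$, which we would return. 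So the difficulty is entirely in handling $N[C]$ cheaply.

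Concretely, the plan is as follows. First, we reduce to a ``balanced separator'' statement. We aim to show that in an $sC_t$-induced-minor-free graph, for any weight distribution there is a set $X$ that is the union of the closed neighbourhoods of $\Oh_{s,t}(\log n)$ vertices (or of a bounded number of induced paths/holes) whose removal leaves components each carrying at most a constant fraction $\tfrac12$ of the total weight. This is in the spirit of the Gartland–Lokshtanov and Pilipczuk–Pilipczuk–Rzążewski analyses for $P_t$-free graphs, where the relevant weight is the \emph{solution} weight. Second, we handle the approximation via a standard random-deletion/weight-shifting trick. We use a fixed optimum $I$ only in the analysis, and we choose the separator so that $X \cap I$ has weight at most $\tfrac{\varepsilon}{\log n}\cdot \wei(I)$ per level of recursion. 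The recursion depth is $\Oh(\log n)$ by balancedness, so the total loss is at most $\varepsilon\,\wei(I)$.

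Third, to make the loss small we guess rather than delete. We branch over a ``heavy'' vertex $v \in I$ of the separator (quasipolynomially many guesses in total across $\Oh(\log n)$-sized collections). When $v$ is guessed to be in the solution, all of $N(v)$ is deleted at no loss. Separator vertices not guessed are simply deleted, and we argue that their total weight in $I$ is small. This uses the fact that $X$ is covered by $\Oh(1)$ neighbourhoods together with an averaging over $\Oh(1/\varepsilon\cdot\log n)$ candidate separators: we construct many nearly disjoint candidate separators, e.g. concentric layers $N^{i}[C]$ for $i=1,\dots,\Oh(\log n/\varepsilon)$ around a hole, and one of them meets $I$ lightly. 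The layered structure around $C$ is where the induced-minor exclusion enters. Vertices far from $C$ (at distance $\geq 2$) form a graph excluding $(s-1)C_t$, so the induction hypothesis applies there. Vertices near $C$ live in a region that, after contracting $C$, behaves like a $C_t$-induced-minor-free neighbourhood structure, which we handle by the $s=1$ machinery.

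We expect the main obstacle to be the second-to-third step: obtaining a separator that is simultaneously balanced, so the recursion depth is logarithmic, and \emph{light} with respect to the unknown optimum. Long holes need not be short, so $N[C]$ is not covered by few neighbourhoods. We would first try to replace $C$ by an induced minor model of $C_t$ with $t$ branch sets that are short paths, taking a shortest such structure. We would then argue that a minimal long hole has neighbourhood dominated by $\Oh(t)$ vertices up to components that are themselves $C_t$-induced-minor-free. If that fails, we would fall back to a Baker-style layering by BFS distance from $C$. Deleting every $k$-th layer with a random offset loses $\wei(I)/k$ in expectation, and each strip between deleted layers must be shown to exclude $(s-1)C_t$ plus a bounded-complexity part near $C$, on which the induction hypothesis is invoked.
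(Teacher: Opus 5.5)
There is a genuine gap, and it sits where you expect the main obstacle: nothing in the proposal produces a cut of the region around the long hole $C$ that is both light for the unknown optimum and makes the recursion progress.

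The balanced separator covered by $\Oh_{s,t}(\log n)$ closed neighbourhoods is only asserted, and your fallbacks do not supply it:
\begin{itemize}
\item The region around $C$ is not $C_t$-induced-minor-free, even after contracting $C$. In the prism $C_L\square K_2$ with $L\geq t$, which has no two disjoint non-adjacent long holes, $N(C)$ is itself a long hole. So the $s=1$ machinery does not apply there.
\item Baker-style layering only detaches the strips avoiding $C$. Those are indeed $(s-1)C_t$-induced-minor-free, being non-adjacent to $C$. But the innermost strip, which may be the whole graph, carries all the difficulty.
\item Once heavy vertices are gone, you control $\wei(N[v]\cap I)$ only vertex by vertex. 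So deleting $N[C]$ is affordable only when $|V(C)|=\Oh(t)$. Statements about dominating $N(C)$ also say nothing about how vertices at depth up to $\Theta(\log n/\varepsilon)$ reconnect different parts of $N(C)$.
\end{itemize}

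The missing idea is to take $H$ to be a \emph{shortest} long hole and to exploit its minimality twice. Here $H$ is taken in $G'-N[I_0]$, with $I_0$ from the heavy-vertex lemma; if $|V(H)|\leq 2t+8$ one simply deletes $N[H]$.

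First, every vertex outside $H$ either sees at most three consecutive vertices of $H$, or sees every $(t-1)$-vertex subpath. The latter all lie in $N(P)$ for a single such subpath $P$, whose intersection with the optimum weighs at most $(t-1)\gamma\,\wei(I')$.

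Second, cut at a light BFS layer $L_{h'}$ with $h'\leq\lceil 5/\beta\rceil$; everything beyond it is non-adjacent to $H$, so $s'$ drops there. In the remaining part, each vertex of $L_p$ reaches $H$ by vertical paths only inside $4p$ consecutive hole vertices, since otherwise it would close a shorter long hole. Hence this bounded-depth part is an annulus along $H$, disconnected by deleting two far-apart strips $D_{a,b}$. A strip $D_{a,b}$ consists of all vertices whose cone meets a fixed window of $b$ consecutive hole vertices.

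These strips are not dominated by few vertices. They are made light by pigeonhole over $\lceil 5/\beta\rceil$ pairwise disjoint candidates on each side of a shortest interval of $H$ whose strip contains at least a sixth of the vertices. When that interval is short, its neighbourhood is deleted as well. This yields pieces that either have at most $5/6$ of the vertices or are non-adjacent to a long hole. That dichotomy is what the single recursion measure $s'+\lfloor\log_{6/5}|V(G')|\rfloor$ rests on.
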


\medskip

Next, combining \cref{thm:mwis} with known results concerning approximation schemes~\cite{DBLP:conf/stoc/GartlandLPPR21}, we obtain a QPTAS for the unweighted variant of $(\tw \leq r,\psi)$-\MWIS under an additional mild assumption that $\psi$ is a \emph{hereditary} \textsf{CMSO}$_2$ formula:
($i$) if $G \models \psi$, then $G' \models \psi$ for every induced subgraph $G'$ of $G$, and
($ii$) if $G_1 \models \psi$ and $G_2 \models \psi$, then $G_1 + G_2 \models \psi$.
We note that many natural graph properties, like e.g., planarity, bounded degeneracy, or excluding a fixed graph as a minor, can be defined by hereditary \textsf{CMSO}$_2$ formulas.

\begin{restatable}{theorem}{thmpack}\label{thm:pack}
Let $r \geq 0$, let $s,t$ be positive integers, $\varepsilon \in (0,1)$ be a real, and $\psi$ be a hereditary \textsf{CMSO}$_2$ formula.
There is an algorithm that, given a graph $G$, in quasipolynomial time returns one of the following outputs:
\begin{itemize}
\item an induced minor model of $sC_t$ in $G$, or
\item a solution to $(\tw \leq r,\psi)$-\MWIS of size at least $(1-\varepsilon)$ times the optimum one, or,
\item a correct conclusion that no solution to $(\tw \leq r,\psi)$-\MWIS exists.
\end{itemize}
\end{restatable}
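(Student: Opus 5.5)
We derive \cref{thm:pack} from \cref{thm:mwis} by the general reduction of Gartland, Lokshtanov, Pilipczuk, Pilipczuk and Rzążewski~\cite{DBLP:conf/stoc/GartlandLPPR21}. Roughly, that reduction says: if a hereditary graph class $\mathcal{C}$ admits a QPTAS for \MWIS, then the unweighted $(\tw \leq r,\psi)$-\MWIS problem with hereditary $\psi$ admits a QPTAS on $\mathcal{C}$. It does not rely on any further structure of $\mathcal{C}$.

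The key steps, in order, are as follows.

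\textbf{Step 1 (the optimum is sparse).} A solution $S$ induces a graph of treewidth at most $r$. Hence $G[S]$ is $r$-degenerate and has a proper $(r+1)$-colouring.

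\textbf{Step 2 (auxiliary instances).} The reduction builds, from $G$, instances of \MWIS on graphs derived from $G$ while keeping the problem inside the class. For instance, one can apply \MWIS to $G$ with weights that record how much of the optimum each vertex captures. Alternatively, one can take a balanced-separator-based decomposition of $G[S]$ and guess a bounded number of pieces. Each call to the \MWIS oracle is made on an induced subgraph of $G$, or on a graph that contains $sC_t$ as an induced minor only if $G$ does. For this we should check that the graphs produced in~\cite{DBLP:conf/stoc/GartlandLPPR21} are induced subgraphs of $G$ (or blow-ups that preserve $sC_t$-induced-minor-freeness). If so, whenever the oracle of \cref{thm:mwis} returns an induced minor model of $sC_t$ in such a graph, we lift it to a model in $G$ and output it.

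\textbf{Step 3 (assembling the approximate solution).} Otherwise every call returns a $(1-\varepsilon')$-approximate independent set. The reduction then combines these answers, using heredity of $\psi$ (closure under induced subgraphs and disjoint unions), into a set $S'$ with $G[S']\models\psi$, $\tw(G[S'])\leq r$, and $|S'|\geq(1-\varepsilon)\mathrm{OPT}$, where $\varepsilon'$ is chosen as a function of $\varepsilon$ and $r$. If no solution exists, for example because $\psi$ fails even on the empty graph, we detect this directly by evaluating $\psi$ on the empty graph; by heredity this is the only case.

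\textbf{Running time and main obstacle.} The number of oracle calls is quasipolynomial, and each call runs in quasipolynomial time by \cref{thm:mwis}, so the whole algorithm is quasipolynomial. The main obstacle is verifying that the black-box reduction only ever queries \MWIS on graphs that remain $sC_t$-induced-minor-free, or from which an induced minor model transfers back to $G$. I would first check that all queried graphs are induced subgraphs of $G$, possibly with modified weights. If the reduction instead uses products or blow-ups, I would argue that replacing vertices by cliques (true twins) preserves excluding $sC_t$ for $t\geq 4$ and maps models back. Small $t$ would be handled separately, since $C_{\geq 3}$-free graphs are forests.
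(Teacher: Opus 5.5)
There is a genuine gap. Your Step~2 is the entire content of the argument, and you leave it as a black box whose description is inaccurate.

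The reduction of Gartland et al.\ is not class-agnostic. Already for exact algorithms, tractability of \MWIS alone does not transfer, assuming \textsf{P}$\neq$\textsf{NP}: \MWIS is polynomial on claw-free graphs, while \textsc{Max Induced Forest} is \NP-hard there. The reduction also does not query \MWIS on induced subgraphs of $G$ or on true-twin blow-ups. It queries \MWIS on a \emph{blob graph} $G^\circ[\cX]$. Its vertices are the connected sets $X \subseteq V(G)$ with $|X| \leq c_\delta$, $\tw(G[X]) \leq r$ and $G[X] \models \psi$. Two blobs are adjacent when they intersect or are joined by an edge. An independent set there is a family of disjoint, pairwise non-adjacent blobs. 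Its union is a solution because the target class is closed under disjoint unions (Lemma~\ref{lem:useblobs}).

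This graph is neither an induced subgraph of $G$ nor a blow-up, so your planned check would fail. What you need is that $G^\circ$ contains $sC_t$ as an induced minor only if $G$ does (Theorem~\ref{thm:blobimfree}). You correctly flag this as the main obstacle but do not resolve it. The paper proves it by induction on $s$:
\begin{itemize}
\item the union $Y$ of the blobs forming one cycle of the model induces a graph containing a hole of length at least $t$, by the one-hole case due to Gartland et al.;
\item the remaining blobs are disjoint from and non-adjacent to $Y$, so they form a model of $(s-1)C_t$ in $(G-N[Y])^\circ$, and the induction hypothesis applies.
\end{itemize}
This is exactly where the structure of $sC_t$-induced-minor-freeness is used.

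The second missing ingredient is why restricting to blobs of constant size costs only a $(1-\varepsilon)$ factor. This restriction is needed so that $G^\circ[\cX]$ has $\Oh(n^{c_\delta})$ vertices and \cref{thm:mwis} remains quasipolynomial in $n$. Your Step~1 does not provide this: degeneracy and an $(r+1)$-colouring only give a $1/(r+1)$-approximation via a colour class. The paper instead uses that graphs of treewidth at most $r$ are weakly hyperfinite. From each component $C$ of an optimal solution one deletes at most $\delta|C|$ vertices so that all remaining components have at most $c_\delta$ vertices. By heredity of $\psi$ and of bounded treewidth, these pieces are valid blobs. This loss is measured in cardinality, which is also why the statement is only for the unweighted problem. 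Your treatment of the infeasible case, evaluating $\psi$ on the empty graph, is fine.
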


\section{Preliminaries}

For integers $k, \ell$, the set $\{k, k+1, \ldots, \ell\}$ is denoted as $[k,\ell]$ and we shorten $[1,k]$ to $[k]$. 

We consider here (vertex-)weighted graphs $(G,\wei)$, where $\wei$ is a function $V(G)\to \RR_{\geq 0}$.
For a subset $X$ of vertices, we define its weight as $\wei(X) = \sum_{v \in X}\wei(v)$.
For a weighted graph $(G,\wei)$, by $\alpha(G,\wei)$ we denote the weight of a maximum-weight independent set in $G$. We assume that all computations on weights are performed in constant time.

The \emph{open neighborhood} of a vertex $v$ in $G$, denoted by $N_G(v)$, is the set of vertices adjacent to $v$.
The \emph{closed neighborhood} of $v$ is the set $N_G[v] = N_G(v) \cup \{v\}$.
For a subset $X$ of vertices of $G$, its \emph{open neighborhood} (resp., \emph{closed neighborhood}) we mean the set $N_G(X) = \bigcup_{v \in X} N_G(v) \setminus X$ (resp., $N_G[X] = \bigcup_{v \in X} N_G[v]$.
If $G$ is clear from the context, we omit the subscript and simply write $N(\cdot)$ and $N[\cdot]$.

We say that two disjoint sets $X,Y \subseteq V(G)$ are non-adjacent if there is no edge with one endpoint in $X$ and the other in $Y$.
Since all subgraphs in the paper are induced, we often identify such a subgraph with its vertex set.

For a connected graph $G$ and a set $A \subseteq V(G)$, the \emph{BFS-layering of $G$ from $A$} is the partition of $V(G)$ into sets $L_0,L_1,\ldots,L_h$ called \emph{layers}, where $L_0=A$ and for all $i \geq 1$, we have $L_i = N(L_{i-1}) \setminus \bigcup_{j < i-1} L_j$, and $h$ is the largest possible so that $L_h \neq \emptyset$.

A \emph{hole} in the graph is an induced cycle with at least 4 vertices.
The following easy observation allows us to look for holes of certain length.
\begin{lemma}\label{lem:find-hole}
    Given a graph $G$ and an integer $t \geq 4$,
    one can in time $\Oh(|V(G)|^t \cdot (|V(G)|+|E(G)|))$ find a
    shortest hole in $G$ of length at least $t$, or conclude that
    no such hole exists. 
\end{lemma}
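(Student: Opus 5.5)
The plan is to guess a short induced path of the hole and then close it off with a shortest path in a suitably restricted graph. Any hole of length at least $t$ contains an induced path $P = v_1 v_2 \cdots v_{t-1}$ on $t-1$ consecutive vertices; call $v_1$ and $v_{t-1}$ its endpoints. The rest of the hole is a path from $v_{t-1}$ back to $v_1$ whose internal vertices avoid $N[\{v_2,\ldots,v_{t-2}\}]$.

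So we enumerate all $\Oh(|V(G)|^{t-1})$ ordered tuples $(v_1,\ldots,v_{t-1})$ and keep those inducing a path. For each, we form $G'$ from $G$ by deleting the closed neighbourhoods $N[v_2],\ldots,N[v_{t-2}]$, and also $v_1$ and $v_{t-1}$ themselves. In $G'$ we look for a shortest path $Q$ from $N(v_{t-1}) \cap V(G')$ to $N(v_1) \cap V(G')$. This is a single BFS in time $\Oh(|V(G)|+|E(G)|)$, seeded with all of the first set at once. Since $t \geq 4$, the endpoints $v_1$ and $v_{t-1}$ are non-adjacent.

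Combining $P$ with $Q$ gives a cycle of length $(t-1)+|V(Q)| \geq t$. We check that it is induced:
\begin{itemize}
\item $Q$ is induced because it is a shortest path.
\item No vertex of $Q$ is adjacent to $v_2,\ldots,v_{t-2}$, since those neighbourhoods were deleted.
\item By minimality of $Q$, only its first vertex is adjacent to $v_{t-1}$ and only its last vertex is adjacent to $v_1$. Otherwise a later vertex of $Q$ adjacent to $v_{t-1}$ could serve as the start, giving a shorter path, and symmetrically at the other end.
\end{itemize}
One edge case: if $Q$ has a single vertex, that vertex is adjacent to both endpoints, and the resulting cycle has length exactly $t$, which is fine.

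Conversely, suppose $C$ is a shortest hole of length at least $t$, and take $P$ to be $t-1$ consecutive vertices of $C$. The remainder of $C$ is a path in $G'$ between the two seed sets, so the shortest $Q$ is no longer, and the hole we produce is at most as long as $C$. Taking the minimum over all guesses therefore returns a shortest hole of length at least $t$. If no guess yields a path, no such hole exists.

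The total running time is $\Oh(|V(G)|^{t-1}\cdot(|V(G)|+|E(G)|))$, within the claimed bound. The only delicate point is the inducedness check at the two endpoints, which is handled by the minimality of $Q$ together with seeding the BFS with the whole neighbourhood set.
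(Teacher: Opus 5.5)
Your proof is correct and uses the same idea as the paper: guess a short induced subpath of the hole, delete the closed neighbourhoods of its internal vertices, and close the hole with a shortest path. The difference is minor. The paper guesses $t$ vertices, joins the two endpoints themselves, and catches holes of length exactly $t$ with a separate brute-force check. Your guess of $t-1$ vertices with a multi-source BFS from $N(v_{t-1})$ to $N(v_1)$ handles the length-$t$ case uniformly and saves a factor of $|V(G)|$.
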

\begin{proof}
First, in time $\Oh(|V(G)|^t)$ we exhaustively check whether $G$ has an induced cycle with exactly $t$ vertices. If such a cycle exists, we return it as it is clearly shortest possible.

Next, for every tuple $\tau$ of $t$ distinct vertices of $G$
we check whether $\tau$ induces a path.
Suppose this is the case and let $x,y$ the endvertices of this path. We remove from $G$ the closed neighborhood of all internal vertices of the path, except for $x$ and $y$.
Finally, we search for a shortest path connecting $x$ and $y$ in the obtained graph; together with the vertices from $\tau$ it forms an induced cycle with at least $t$ vertices.
We return the shortest of all cycles found in this process, or report that no cycle was found.
\end{proof}

Finally, let us recall a result that is particularly useful for constructing QPTASes for \MWIS. A~vertex $v$ of a~weighted graph $(G,\wei)$ is \emph{$\gamma$-heavy}
with respect to a~set $I \subseteq V(G)$ if $\wei(N_G[v] \cap I) \geqslant \gamma \cdot \wei(I)$.

\begin{lemma}[{\cite[Lemma 4.1]{DBLP:journals/siamcomp/ChudnovskyPPT24}}]\label{lem:delete-heavy}
Let $(G,\wei)$ be a weighted graph on $n$ vertices and $\gamma \in (0,1)$ be a real number.
In time $n^{\Oh(\log n / \gamma)}$ we can enumerate a family $\mathcal{I}'$ of $n^{\Oh(\log n / \gamma)}$ independent sets in $G$, each of size at most $\lceil{\gamma^{-1} \log n}\rceil$, such that for every independent set $I$ there exists $I' \in \mathcal{I}'$
such that $I' \subseteq I$ and every $\gamma$-heavy vertex with respect to $I$ belongs to $N[I']$.
\end{lemma}

Intuitively,~\cref{lem:delete-heavy} is a quasipolynomial approximation-preserving reduction to instances without $\gamma$-heavy vertices:
we can exhaustively guess $I' \in \mathcal{I}'$ and delete $N[I']$ from the graph.

\section{QPTAS for \MWIS: Proof of \cref{thm:mwis}}\label{sec:mwis}
In this section, we present a~QPTAS for \MWIS in graphs excluding $sC_t$ as an induced minor.

\thmtCt*

\begin{proof}
Fix $s$ and $t$. Since the value of $t$ will be fixed throughout the whole proof,
by a \emph{long hole} we mean an induced cycle with at least $t$ vertices.
We denote such a long hole shortly by $C_{\geq t}$. 

Let $(G,\wei)$ be a weighted graph on $n$ vertices.
Let $\varepsilon > 0$ be the desired precision, i.e., we aim for an $(1-\varepsilon)$-approximation or for finding an induced subgraph isomorphic to $sC_{\geq t}$.

\subparagraph*{Strategy.}
The algorithm is a typical recursive branching algorithm.
Each recursive call is invoked for an induced subgraph $G'$ of $G$ and an integer $s' \leq s$;
the goal is to either exhibit an~induced $s' C_{\geq t}$ subgraph of $G'$ or 
an independent set $I'$ of weight close to $\alpha(G',\wei)$ (the actual error analysis is made formally later;
it is not just merely an $(1-\varepsilon)$-approximation to $\alpha(G',\wei)$).
The initial call is to $G' = G$ and $s' = s$. 

We set
\[\beta \coloneqq \frac{\varepsilon}{s + t+\log_{6/5} n}~\text{ and }~\gamma \coloneqq \frac{\beta^3}{1000t}.\]

The computation of one recursive call is embedded in the following claim.
\begin{claim}\label{clm:set_X}
	Given an induced subgraph $G'$ of $G$ and an integer $1 \leq s' \leq s$, 
	in time $n^{\Oh(\log n / \gamma)}$ one can either
	report an induced $s'C_{\geq t}$ subgraph in $G'$ or enumerate
	a family $\mathcal{X}$ of pairs $(X,J)$ such that:
	\begin{enumerate}
		\item for every $(X,J) \in \mathcal{X}$, we have
		$X \subseteq V(G')$ and $J$ is an independent set in $G'$
    with $N[J] \subseteq X$;\label{p:1}
		\item for every $(X,J) \in \mathcal{X}$, every connected
		component $D$ of $G'-X$ satisfies one of the following conditions:
			\begin{enumerate}
				\item there is a $C_{\geq t}$ in $G'$ which is disjoint and nonadjacent to $D$;\label{p:hole} or
				\item $D$ has at~most $\frac{5}{6} |V(G')|$~vertices.\label{p:size}
			\end{enumerate}
		\item for every independent set $I'$ in $G'$ of weight $\alpha(G',\wei)$,
		there exists $(X,J) \in \mathcal{X}$ with $J \subseteq I'$ and
		$\wei((X \setminus J) \cap I') \leq \beta \alpha(G',\wei)$;
		\item the size of $\mathcal{X}$ is bounded by 
		$n^{\Oh(\log n / \gamma)}$.
	\end{enumerate}
\end{claim}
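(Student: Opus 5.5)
We first apply \cref{lem:delete-heavy} to $(G',\wei)$ with parameter $\gamma$. This yields a family $\mathcal{I}'$ of $n^{\Oh(\log n/\gamma)}$ candidate sets $J$. For each $J\in\mathcal{I}'$ we work in $G'' := G'-N[J]$. If $J\subseteq I'$ is the good guess, then no vertex of $G''$ is $\gamma$-heavy with respect to $I'' := I'\setminus N[J]$, and $I'' = I'\cap V(G'')$. For each such $J$ we produce a polynomial (or quasipolynomial) number of sets $Y\subseteq V(G'')$ and output $X := N[J]\cup Y$. Property~\ref{p:1} holds by construction. Property~3 reduces to showing that some choice of $Y$ satisfies $\wei(Y\cap I'')\le \beta\,\alpha(G',\wei)$; note that $(X\setminus J)\cap I' = Y\cap I'$ since $N(J)\cap I'=\emptyset$.

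To construct $Y$, we use \cref{lem:find-hole} to find a shortest long hole $C$ in $G''$. If there is none, then $G''$ is $C_{\geq t}$-free. In that case we would rather use a balanced-separator argument for graphs without long holes, but we can also treat it within the same scheme below. If $s'=1$ and $C$ exists, we report $C$. Otherwise, take the BFS-layering $L_0=C, L_1,\dots,L_h$ of (the component of) $G''$ from $C$.

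Every component $D$ of $G''-N[C]$ is non-adjacent to $C$, so it satisfies \ref{p:hole} with respect to $G'$, provided that $C$ remains a long hole in $G'$. It does, since $C$ is induced in $G'$. So it suffices to delete $N[C]$ from $G''$. The difficulty is that $N[C]$ may carry a large weight of $I''$. Because $C$ is a shortest long hole, each vertex outside $C$ has neighbours only within a short subpath of $C$ (at most about $3$ consecutive vertices) or else creates a shorter long hole. Using this, we cover $N[C]$ by the closed neighbourhoods of $\Oh(|C|)$ vertices grouped into arcs.

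Since $|C|$ can be large, we instead cut $C$ into $\Oh(1/\beta)$ arcs and use the non-heaviness of every vertex. We guess, among polynomially many options, a partition of $C$ into few \emph{separating} short pieces whose neighbourhoods have total $I''$-weight at most $\Oh(t\gamma/\beta^2)\,\wei(I'')\le \beta\,\alpha(G',\wei)$; this is where $\gamma=\beta^3/(1000t)$ comes from. Averaging then shows that such pieces exist. Indeed, the $\Oh(t)$ closed neighbourhoods around any vertex of $C$ weigh at most $\Oh(t\gamma)\wei(I'')$, and among $\Theta(1/\beta^2)$ evenly spaced candidate positions we pick a light one.

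What remains after removing these pieces is a region that we must show falls into one of the two cases of property~2. Components either avoid $N[C]$, giving \ref{p:hole}, or are attached to a sub-arc of $C$ containing at most a $1/6$ fraction of the structure, giving \ref{p:size}. When no long hole exists, or when the attached parts are large, we repeat with BFS layers: we choose a layer $L_i$ together with the balanced "middle" part via the standard $5/6$-balanced argument, using light neighbourhoods of $\Oh(1)$ vertices.

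The main obstacle is this last step. We must guarantee that a component touching $N[C]$ but of size more than $\frac56|V(G')|$ cannot survive. This requires showing that a large component attached to $C$ either contains a second long hole non-adjacent to $C$, in which case we recurse with $s'-1$ conceptually, or can be split by the neighbourhood of $\Oh(1/\beta)$ guessed vertices chosen from a shortest-path skeleton. I would first try to handle it by taking a shortest path $P$ across the large component. The union $C\cup P$ with a few guessed anchor vertices forms a bounded-size "theta" skeleton whose light neighbourhood is $\frac56$-balanced. The weight bound then again follows from non-heaviness.
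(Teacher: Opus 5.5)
Your opening matches the paper: guessing $J$ by \cref{lem:delete-heavy}, passing to $G'-N[J]$, taking a shortest long hole $C$, and noting that components non-adjacent to $C$ satisfy Property~\ref{p:hole}. But the core of the claim is never established: a light set whose removal leaves every component either non-adjacent to $C$ or of size at most $\frac{5}{6}|V(G')|$. You flag this yourself as ``the main obstacle'', and the remedies you sketch do not work.

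\textbf{Attachment to $C$.} Your dichotomy is false. A vertex whose consecutive neighbours along $C$ are always fewer than $t-2$ steps apart closes only holes shorter than $t$, so minimality of $C$ does not exclude it. Yet such a vertex glues all arcs of $C$ together. The correct statement (\cref{clm:attach-to-hole}) is that every vertex either attaches within three consecutive vertices of $C$, or has a neighbour in \emph{every} $(t-1)$-vertex subpath of $C$. The latter vertices are all removed at once as $N(P)\setminus V(C)$ for a single such subpath $P$, at cost $(t-1)\gamma\,\wei(I')$.

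\textbf{Deeper layers.} Deleting closed neighbourhoods of boundedly many pieces of $C$ removes only vertices of $L_0\cup L_1$. Nothing in your argument prevents paths in $L_2,L_3,\dots$ from running alongside $C$ past the deleted pieces. So a single component may still meet $C$ and contain almost all vertices, violating both conditions of property~2. Those neighbourhoods are light for free by non-heaviness, so your averaging does no work there. Averaging is needed exactly for separators that cut through the deeper layers, and these are not neighbourhoods of few vertices.

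\textbf{Other points.} The ``theta skeleton'' $C\cup P$ is not of bounded size, since $|V(C)|$ is unbounded, so its neighbourhood need not be light. A second long hole \emph{inside} a large component attached to $C$ is useless, because Property~\ref{p:hole} asks for a hole non-adjacent to the component. Finally, you leave the $C_{\geq t}$-free case open. The paper simply solves $G'-N[J]$ exactly with the quasipolynomial algorithm for long-hole-free graphs and outputs $(V(G'),J\cup J^{*})$.

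\textbf{The missing idea} is the paper's ``cylinder'' argument around $C$.
\begin{itemize}
\item[(i)] If $|V(C)|$ is small, $N[V(C)]$ is light and is deleted. Otherwise, after the removal above, take the BFS layering from $C$. Delete a layer $L_{h'}$ with $h'\leq\lceil 5/\beta\rceil$ and $\wei(L_{h'}\cap I')\leq\frac{\beta}{5}\wei(I')$, found by averaging over these disjoint layers. Everything beyond it is non-adjacent to $C$.
\item[(ii)] Compare $C$ with the hole formed by two vertical paths, a connecting path, and the \emph{longer} arc of $C$. Minimality of $C$ then shows that the cone of a vertex of $L_p$ (the hole vertices it reaches by vertical paths) lies in an arc of at most $4p$ vertices. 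The union of the cones of an edge lies in an arc of at most $4p+1$ (\cref{cor:small_cone} and~\cref{cor:small-cone-edge}).
\item[(iii)] Consequently a stratum $D_{a,b}$ cuts through every layer; it consists of all vertices whose cone meets a window of $b=4\lceil 5/\beta\rceil+1$ consecutive hole vertices. Two far-apart strata disconnect the truncated graph (\cref{clm:two-strata-disconnecting}). Strata are arbitrary sets, and a light one is found by averaging over $\lceil 5/\beta\rceil$ pairwise disjoint candidates.
\item[(iv)] Balance comes from the shortest arc $u_{a_1},\dots,u_{a_2}$ with $|\bigcup_{j\in[a_1,a_2]}D(u_j)|\geq n_2/6$. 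It is cut off on both sides by light strata within distance $d=\Theta(1/\beta^2)$. Only when $a_2-a_1\leq d$ are neighbourhoods of $\Oh(d)$ hole vertices also deleted, to detach the middle part from $C$. This is the only place where a count like your $\Theta(1/\beta^2)$, and hence $\gamma=\beta^3/(1000t)$, enters.
\end{itemize}

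\textbf{A note on the paper's step (iv).} For that detachment, the deleted neighbourhoods should cover the whole arc between the two strata, not only $u_{a_1},\dots,u_{a_2}$ as written in the paper. For example, if $a_1=a_2$, an $L_2$-vertex of $D(u_{a_1})$ may also reach $C$ through an $L_1$-vertex attached at $u_{a_1+4}$. The cost of covering the whole arc remains $\Oh(d\gamma)\wei(I')$.
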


We now argue how \cref{clm:set_X} yields Theorem~\ref{thm:mwis}.
Consider a recursive call with parameters $(G',s')$. 
If $s' = 0$, then we return $\varnothing$ as an induced $s'C_{\geq t}$. 
If $V(G') = \varnothing$, we return the empty (independent) set. 

Otherwise, we apply Claim~\ref{clm:set_X} to $G'$ and $s'$.
If an induced $s'C_{\geq t}$ is returned, we return it and conclude. 
Otherwise, we iterate over the obtained family $\mathcal{X}$.
For every $(X,J) \in \mathcal{X}$, 
let $\mathcal{D}_X$ be the set of connected components of $G'-X$.
For every $D \in \mathcal{D}_X$,
we recurse on $G_D' \coloneqq G'[D]$ and either $s_D' \coloneqq s'$ if $|D| \leq \frac{5}{6} |V(G')|$ (Property~\ref{p:size})
or $s_D' \coloneqq s'-1$ otherwise. 
If an induced $s_D'C_{\geq t}$ is returned in $G_D'$, we augment it
with a $C_{\geq t}$ non-adjacent to $D$ in $G'$ if $s_D' = s'-1$
(such a long hole can be found using~\cref{lem:find-hole}) and return.
Otherwise, if an independent set $I_D'$ is returned, we 
compute an independent set $I_{(X,J)} = \bigcup_{D \in \mathcal{D}_X} I_D' \cup J$
(Note that this is an independent set due to \hyperref[p:1]{condition $N[J] \subseteq X$}.)
Finally, if no $s'C_{\geq t}$ was returned for any $(X,J) \in \mathcal{X}$,
we return $I_{(X,J)}$ of maximum weight among all options $(X,J) \in \mathcal{X}$.
This completes the description of the algorithm.

\subparagraph*{Complexity analysis.}
Let us move to the analysis.
For a recursive call $(G',s')$, let
\[
	\mu(G', s') \coloneqq s' + \left\lfloor \log_{6/5} |V(G')| \right\rfloor.
\]
Observe that every recursive subcall $(G_D', s_D')$ satisfies $\mu(G_D',s_D') \leqslant \mu(G',s')-1$.
Hence, the depth of the recursion is bounded by $s + \lfloor \log_{6/5} n \rfloor$.
Since every recursive call results in $n^{\Oh(\log n / \gamma)}$ subcalls, the overall running time is as follows: \[
 n^{\Oh \left( \log^2 n /\gamma \right)} =  n^{\Oh \left( \log^2 n/\beta^3 \right)} = n^{\Oh \left( \log^5 n/\varepsilon^3 \right)},
\]
i.e., quasipolynomial in $n$ (we hide terms depending on $s$ and $t$ in the $\Oh(\cdot)$-notation).

\subparagraph*{Correctness and approximation guarantee.}
Clearly, if any recursive call $(G',s')$ finds an induced $s'C_{\geq t}$,
it is propagated up in the recursion tree and results in exhibiting
an induced $sC_{\geq t}$ in the root call. Assume then that every
recursive call $(G',s')$ returned an independent set, which we denote
$I_{G',s'}$. 

Consider a recursive call $(G',s')$ and let $I'$ be an independent set in $G'$
of weight $\alpha(G',\wei)$. By the promise of Claim~\ref{clm:set_X},
there exists $(X,J) \in \mathcal{X}$ in this recursive call with
$J \subseteq I'$ and $\wei((X \setminus J) \cap I') \leq \beta \alpha(G',\wei) = \beta \wei(I')$.
In particular, $\alpha(G'-X,\wei) \geq (1-\beta)\alpha(G',\wei)$.
By a standard induction on the depth of the subtree of the recursion tree,
we obtain that if the depth of the recursion tree of a call $(G',s')$
is $h$, then $\wei(I_{G',s'}) \geq (1-\beta)^h \alpha(G',\wei)$. 
In particular, since the recursion depth at the root is bounded
by $s + \lfloor \log_{6/5} n \rfloor$, the root call returns
an independent set of weight at least 
\[ (1-\beta)^{s + \lfloor \log_{6/5} n \rfloor} \alpha(G,\wei) \geq \left(1-\beta(s + \lfloor \log_{6/5}n \rfloor)\right) \alpha(G,\wei) \geq (1-\varepsilon)\alpha(G,\wei). \]

Thus, it remains to prove Claim~\ref{clm:set_X}.

\subparagraph*{Proof of \cref{clm:set_X}.}
	Let $I'$ be an independent set in $G'$ of weight $\alpha(G',\wei)$.

	We start with a standard application of~\cref{lem:delete-heavy}.
	We enumerate the family $\mathcal{I}_0$ of
	$n^{\Oh(\log n / \gamma)}$ independent sets in $G'$ such that
	there exists $I_0 \in \mathcal{I}_0$ such that 
	$I_0 \subseteq I'$ and every $\gamma$-heavy
	vertex with respect to $I'$ belongs to $N[I_0]$.
	Henceforth, we will refer to this choice of $I_0$ as \emph{the correct choice of $I_0$}.

	Initiate $\mathcal{X} = \varnothing$.
	We iterate over the elements of $\mathcal{I}_0$. 
	For every $I_0 \in \mathcal{I}_0$ we proceed as follows.
	Let $X_0 \coloneqq N_{G'}[I_0]$ and $G_0 \coloneqq G'-X_0$.

	We find a shortest long hole in $G_0$ by~\cref{lem:find-hole}.
	If \cref{lem:find-hole} reports that $G_0$ is $C_{\geq t}$-free,
	we invoke the exact quasipolynomial-time algorithm of~\cite{DBLP:conf/stoc/GartlandLPPR21}
	that finds in time $n^{\Oh(\log^4 n)}$ an independent set $J$ of maximum
	weight in $G_0$. We insert $(V(G'), I_0 \cup J)$ into $\mathcal{X}$ and conclude
	computation for $I_0$. Note that for the correct choice of $I_0$,
	we have $\wei(I_0 \cup J) = \alpha(G',\wei)$.
	
	Assume then that \cref{lem:find-hole} returns a long hole $H$.
	If $|V(H)|\leq 2t+8$, then, assuming the correct choice of $I_0$,
	we have \[\wei(N_{G_0}[V(H)] \cap I') \leq (2t+8)\gamma \cdot \wei(I') \leq \beta\wei(I').\]
	Hence, we can insert $(X \coloneqq X_0 \cup N_{G_0}[V(H)], I_0)$ into $\mathcal{X}$
	and conclude the computation for the current $I_0$, as then $G'-X$ is non-adjacent to $H$,
	so every connected component of $G'-X$ satisfies Property~\ref{p:hole}.
	
	Therefore, from now on, we assume that $|V(H)|>2t+8$.
	
	\begin{claim}\label{clm:attach-to-hole}
		Every vertex $v \in V(G_0) \setminus V(H)$ has its neighbors in $V(H)$
		included in a~3-vertex subpath of $H$, or has a~neighbor in every subpath of $H$ at~least $t-1$ vertices.
	\end{claim}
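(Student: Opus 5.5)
We argue by the minimality of $H$: since $H$ is a shortest long hole in $G_0$ (found by \cref{lem:find-hole}), any induced cycle of length at least $t$ in $G_0$ has at least $|V(H)|$ vertices. Fix $v \in V(G_0)\setminus V(H)$, and suppose its neighbors on $H$ are not contained in any 3-vertex subpath of $H$. We show that $v$ has a neighbor in every subpath of $H$ with at least $t-1$ vertices. Suppose not. Then some subpath $P$ of $H$ with $t-1$ vertices contains no neighbor of $v$; we may shrink to exactly $t-1$ vertices, since a longer neighbor-free subpath contains one of that length.

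Enlarge $P$ to a maximal subpath $P^\ast \supseteq P$ of $H$ avoiding $N(v)$. Its two boundary vertices, the vertices of $H$ just outside $P^\ast$ on either side, call them $a$ and $b$, are neighbors of $v$. We must have $a\neq b$: otherwise $v$ has exactly one neighbor on $H$, which lies in a 3-vertex subpath, a contradiction. Let $Q = a P^\ast b$, the subpath of $H$ from $a$ to $b$ through $P^\ast$. Then $v\,a\,Q\,b\,v$ is an induced cycle, because $v$ is adjacent to no interior vertex of $Q$ and $Q$ is an induced path (a proper subpath of the hole $H$). Its length is $|V(P^\ast)|+3 \geq t+2 \geq t$, so it is a long hole in $G_0$.

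Now consider the complementary subpath $R$ of $H$ from $a$ to $b$ going the other way. All neighbors of $v$ on $H$ lie in $R$, including $a$ and $b$. Since the neighbors are not within a 3-vertex subpath, $R$ has at least 4 vertices. Hence the cycle $vaQbv$ has exactly $|V(H)| - |V(R)| + 3 \leq |V(H)|-1$ vertices, which is shorter than $H$. This contradicts the minimality of $H$. The only case needing care is when $R$ has exactly 3 vertices, but then all neighbors lie in a 3-vertex subpath, which is excluded.

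The main obstacle is checking that this cycle is genuinely induced and long. Inducedness follows since $Q \neq H$, so $Q$ has no chord, and $v$ sees only the ends $a,b$ of $Q$. Length follows since $Q$ contains $P$, which has $t-1$ vertices, giving at least $t+2$ vertices in total. We never need $|V(H)|>2t+8$ here; that assumption only ensures that $R$ and $Q$ fit.
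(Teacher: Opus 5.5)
Your proof is correct and follows the paper's own argument. The paper likewise takes the two neighbors $u \neq u'$ of $v$ that delimit a subpath $P' \supseteq P$ with no other neighbors of $v$, and notes that $V(P') \cup \{v\}$ is a long hole, which would be shorter than $H$ unless $u,u'$ are at distance at most $2$ along the other side of $H$. Your handling of inducedness (via $|V(R)| \geq 4$) and of the degenerate cases ($a=b$, or no neighbors on $H$) spells out details the paper leaves implicit.
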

	\begin{claimproof}
		Assume $v$ fails at realizing the latter condition.
		Then there is a~$(t-1)$-vertex subpath $P$ of~$H$, disjoint from the neighborhood of~$v$.
		Let $u \neq u' \in V(H)$ be such that $u$ and $u'$ are neighbors of $v$, and delimit a subpath $P'$ of $H$ containing $P$ and containing only two neighbors of $v$ (its endpoints $u$ and $u'$).
		Observe that if $v$ has at~most one neighbor in $V(H)$, it readily satisfies the first condition of the claim.
		For $V(P') \cup \{v\}$ not to contradict that $H$ is a~shortest  long hole of $G$, vertices $u$ and $u'$ have to be at distance at most~2 in $H$ (along the \emph{other} subpath of $H$ delimited by $u$ and $u'$), and so $v$ satisfies the first condition.
	\end{claimproof}
	
	We first get rid of the vertices satisfying the second condition of~\cref{clm:attach-to-hole}.
	Let $P$ be a~subpath of $H$ with $t-1$ vertices.
	We set $X_1 \coloneqq N_{G_0}(P) \setminus V(H)$ and 
	let $G_1$ be the connected component of $G_0 \setminus X_1$
	that contains $H$.

	We will insert $X_0 \cup X_1$ into any output set $X$ in what follows,
	so one may think of this step as deleting the vertices of $X_1$.
	Every connected component of $G_0 \setminus X_1$ except for $G_1$
	satisfies Property~\ref{p:hole}, so we need only to focus on $G_1$.

	As $|V(P)|=t-1$, assuming the correct choice of $I_0$,
	we have \[\wei(X_1 \cap I') \leqslant (t-1)\gamma \wei(I')\leq  \frac{\beta}{5}\wei(I').\]
    Note that every vertex satisfying the second condition of~\cref{clm:attach-to-hole}
	lies in $X_1$.

	Let us now consider BFS layering from $H$ in $G_1$ with layers
  $V(H)=L_0,\ L_1,\  L_2, \dots, L_h$.
	If $h \leq \lceil \frac{5}{\beta} \rceil$, we set $h' \coloneqq h + 1$,
	and otherwise we iterate over all options of $1 \leq h' \leq \lceil \frac{5}{\beta} \rceil$.
	In both cases, there exists a choice of $h'$ with 
	$\wei(I' \cap L_{h'}) \leq \frac{\beta}{5} \wei(I')$;
	we call it henceforth \emph{the correct choice of $h'$}.
	For fixed $h'$, we set $X_2 \coloneqq L_{h'}$ and
	$G_2 \coloneqq G_1[\bigcup_{i=0}^{h'-1} L_i]$.
	Note that $G_2$ is connected and
	every connected component of $G_1-X_2$ distinct from $G_2$
	lies in $\bigcup_{i > h'} L_i$ and, consequently, is non-adjacent to $H$
	and therefore satisfies point~\ref{p:hole}.
	Hence, in what follows we will insert $X_2$
	into any returned set $X$ and we focus on $G_2$.
	
	Note that $\cL \coloneqq L_0,L_1,\ldots,L_{h'-1}$ is the BFS
	layering in $G_2$ from $H$.
	The following claims and definitions refer to $G_2$.
	
For $i < j$, a~\emph{vertical path} between $u \in L_i$ and $v \in L_j$ with respect to $\cL$ is a~path $P$ such that $V(P)$ intersects exactly once every layer $L_k$ such that $k \in [i,j]$, and no other layer.
The \emph{cone} of a~vertex $v \in V(G_2)$ (still with respect to $\cL$) is the set of vertices \[ \{ x \in V(H) \mid \text{there is a~vertical path between $x$ and~$v$}\}.\] 
	
	\begin{claim}\label{clm:bounded_cones_length-pre}
		In $G_2$, let $x\in L_p$, $y\in L_q$, with $p\leq q$ such that there exists a path $P \subseteq \bigcup_{i\geq p} L_i$ with $\ell$ vertices, connecting $x$ and $y$ which inner vertices are non-adjacent to $H$.
     Let $x_1$ and $y_1$ be two vertices in the cones of $x$ and $y$, respectively.
		 Then, the distance between $x_1$ and $y_1$ along $H$ is at most $p+q+\ell$.
	\end{claim}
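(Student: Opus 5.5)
We prove the claim by building a walk from $x_1$ to $y_1$ in $G_2$ whose internal vertices lie outside $V(H)=L_0$, and then comparing its length with the length of $H$.

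\textbf{The walk.} Since $x_1$ lies in the cone of $x$, there is a vertical path $Q_x$ from $x_1\in L_0$ to $x\in L_p$. It meets each of $L_0,\dots,L_p$ exactly once, so it has $p+1$ vertices and only $x_1$ lies in $H$. Likewise there is a vertical path $Q_y$ from $y_1$ to $y$ with $q+1$ vertices, and only $y_1$ lies in $H$. Concatenating $Q_x$, $P$ and $Q_y$ gives a walk $W$ from $x_1$ to $y_1$ with at most $p+q+\ell$ edges. Its internal vertices avoid $V(H)$: those coming from $Q_x$ and $Q_y$ lie in layers $\geq 1$, and the inner vertices of $P$ are non-adjacent to $H$. (If $p=0$, then $x=x_1$, and similarly for $q=0$; if $P$ has an endpoint in $H$, that endpoint is just $x_1$ or $y_1$.)

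\textbf{An induced path.} Take a shortest $x_1$--$y_1$ path $R$ inside $G_2[V(W)]$. It has at most $p+q+\ell$ edges, its internal vertices lie outside $H$, and it is induced in $G_2$ and hence in $G_0$. If $x_1=y_1$, or $x_1$ and $y_1$ are adjacent on $H$, the claim is trivial, so assume otherwise.

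\textbf{Forming a hole.} Let $d$ be the distance between $x_1$ and $y_1$ along $H$, and suppose for contradiction that $d>p+q+\ell$. Let $A$ be the longer of the two arcs of $H$ between $x_1$ and $y_1$; it has at least $|V(H)|/2$ edges. We want to close $R$ with a piece of $A$ into an induced cycle $C$ of length at least $t$ but shorter than $H$, which contradicts the choice of $H$ as a shortest long hole in $G_0$ (note $G_2\subseteq G_0$).

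\textbf{Main obstacle.} The difficulty is that internal vertices of $R$ in $L_1$ may have neighbours on $A$, so $R\cup A$ need not be induced. Here we use \cref{clm:attach-to-hole} together with the removal of $X_1$. Every vertex of $G_1$ outside $H$ has all its neighbours in $V(H)$ within a 3-vertex subpath of $H$; vertices of the other type were placed in $X_1$. Therefore we walk along $R$ and cut at the last internal vertex adjacent to $H$ near the $x_1$ side, and symmetrically near the $y_1$ side. This gives a subpath $R'$ of $R$ with endpoints $a,b\in V(H)$ whose internal vertices are non-adjacent to $H$. Each shortcut changes positions along $H$ by at most $2$, so $a,b$ are within $O(1)$ of $x_1,y_1$. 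We then form $C$ from $R'$ together with the arc of $H$ between $a$ and $b$ of length roughly $|V(H)|-d$.

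\textbf{Length bounds.} This $C$ is induced. Its length is at least $t$, because $|V(H)|-d\geq |V(H)|/2>t+4$ using $|V(H)|>2t+8$. It is strictly shorter than $H$, because $R'$ has at most $p+q+\ell$ edges, which is less than $d-O(1)$. Any slack in the constant can be absorbed by running the comparison with $d$ rather than with $H$'s full length. Carrying out this bookkeeping carefully, including the $\pm2$ shifts, is the technical core of the argument.
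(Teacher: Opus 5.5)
Your route is the paper's: glue the vertical paths $Q_x,Q_y$ to $P$, take a shortest connecting path, use \cref{clm:attach-to-hole} (after deleting $X_1$) so that layer-$1$ vertices see $H$ only inside a $3$-vertex subpath, close up along the longer arc $A$, and contradict the minimality of $H$. The paper avoids any cutting. It lets $R$ be a shortest path between the layer-$1$ vertices $x_2\in V(Q_x)$ and $y_2\in V(Q_y)$ in $G_2[V(W)\setminus\{x_1,y_1\}]$, and closes it with $x_2x_3$, $A'$, $y_3y_2$, where $x_3,y_3$ are the extreme neighbours of $x_2,y_2$ on $A$.

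However, the step you defer as the ``technical core'' is misdescribed, and as sketched it does not give the exact bound.

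\emph{The path $R'$.} It cannot exist as you state it. A path with an endpoint $a\in V(H)$ and at least one internal vertex has an internal vertex adjacent to $H$. Moreover, $a,b$ need not lie on $R$. What you need is that the only vertices of $W$ other than $x_1,y_1$ with a neighbour in $H$ are $x_2$ and $y_2$: vertices of $Q_x,Q_y$ in layers $\geq 2$ have none by the BFS layering, and inner vertices of $P$ have none by hypothesis. So either some layer-$1$ vertex sees both $x_1$ and $y_1$, whence $d\leq 2$ and the claim is trivial, or $R=x_1,x_2,\dots,y_2,y_1$ with every vertex strictly between $x_2$ and $y_2$ non-adjacent to $H$. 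Without this observation, your ``last internal vertex adjacent to $H$ near the $x_1$ side'' is not well defined, and a middle vertex of $R$ with a neighbour deep inside $A$ would break the construction.

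\emph{The length count.} From $d>p+q+\ell$ you only get $p+q+\ell\leq d-1$, so ``less than $d-O(1)$'' is unjustified. If the closing arc could exceed $A$ by a constant, you would only prove $d\leq p+q+\ell+O(1)$. In fact nothing is lost. Take $a$ (resp.\ $b$) to be the neighbour of $x_2$ (resp.\ $y_2$) on $A$ farthest from $x_1$ (resp.\ $y_1$) along $A$; inducedness of $C$ forces this choice anyway. The chords $x_2b$ and $y_2a$ are then excluded, since $x_2,y_2$ see $H$ only within distance $2$ of $x_1,y_1$, while $d\geq 3$ and $|E(A)|\geq t+5$. The closing arc is a subarc of $A$ with at least $|E(A)|-4\geq t+1$ edges. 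The path $a,x_2,\dots,y_2,b$ has exactly $|E(R)|\leq p+q+\ell-1$ edges. Hence $|E(C)|\leq |E(A)|+p+q+\ell-1<|E(A)|+d=|E(H)|$, which is the contradiction you want.
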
			
	\begin{claimproof}
		Let $x$, $y$, $L_p$, $L_q$, $\ell$, $x_1$, $y_1$ and $P$ be as in the claim statement.
		
		Let $Q_x$ and $Q_y$ be two vertical paths from $x$ and $y$ 
    to $x_1$ and $y_1$, respectively.
    Let $x_2$ and $y_2$ be the penultimate vertices of $Q_x$ and $Q_y$,
    respectively. Note that $x_2,y_2 \in L_1$ and if $p=1$, then $x_2=x$ and $y_2=y$.
		Let $R$ be the shortest path in $G_2[V(P) \cup V(Q_x) \cup V(Q_y) \setminus \{x_1,y_1\}]$
		connecting $x_2$ and $y_2$. Note that $R$ is non-adjacent to $H$
		except for the endpoints.
		Furthermore, $|E(R)| \leq (p-1) + (q-1) + \ell$.

		Let $A$ and $B$ be the two subpaths of $H$ connecting $x_1$ and $y_1$.
		Without loss of generality, we assume that $|E(A)| \geq |E(B)|$.
		As $|V(H)| > 2t+8$, we have $|E(A)| \geq t + 5$.

		Let $x_3$ and $y_3$ be the neighbors of $x_2$ and $y_2$	on $A$, respectively, such that no vertex of $A$ between $x_3$ and $y_3$ 
		is adjacent to neither $x_2$ nor $y_2$.
		Let $A'$ be the subpath of $A$ from $x_3$ to $y_3$.
		Since the neighborhood in $H$ of 
    a vertex from $L_1$ is contained in a subpath of at most three consecutive vertices (by \cref{clm:attach-to-hole}),
		we have $|E(A')| \geq |E(A)|-4 \geq t$.

		We now consider a cycle $H'$ that is a concatenation of $x_2x_3$, $A'$, $y_3y_2$, and $R$.
		Note that $H'$ is an induced cycle in $G_2$. 
		As $|E(A')| \geq t$, $H'$ is a long hole.
		As $H$ is the shortest hole of length at least $t$, we have
		\[|E(A)| + |E(B)| = |E(H)| \leq |E(H')| = 2 + |E(A')| + |E(R)| \leq |E(A)| + p + q + \ell.\]
		Hence, $|E(B)| \leq p + q + \ell$. This completes the proof.
	\end{claimproof}
	\begin{claim}\label{clm:bounded_cones_length}
		In $G_2$, let $x\in L_p$, $y\in L_q$, with $p\leq q$ such that there exists a path $P \subseteq \bigcup_{i\geq p} L_i$ with $\ell$ vertices
    connecting $x$ and $y$ which is disjoint and non-adjacent to $H$. Then the union of cones of $x$ and $y$ is contained in a subpath of $H$ with at most $2(p+q+\ell)$ vertices.
	\end{claim}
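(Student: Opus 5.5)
The plan is to derive this from \cref{clm:bounded_cones_length-pre} by taking the union of all pairwise distance bounds and checking that it fits into one short arc of $H$. Let $C_x$ and $C_y$ denote the cones of $x$ and $y$, and let $d \coloneqq p+q+\ell$.

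\textbf{Step 1: pairwise distances between and within cones.}
\begin{itemize}
\item For $x_1 \in C_x$ and $y_1 \in C_y$, apply \cref{clm:bounded_cones_length-pre} to $x,y,P$. Since $P$ is disjoint from and non-adjacent to $H$, its inner vertices are non-adjacent to $H$, so the hypothesis holds. Hence $\mathrm{dist}_H(x_1,y_1)\le d$.
\item For two vertices $x_1,x_1'\in C_x$, apply \cref{clm:bounded_cones_length-pre} with $y=x$ and the one-vertex path $P=(x)$, so $\ell=1$ and $q=p$. This gives $\mathrm{dist}_H(x_1,x_1')\le 2p+1$.
\item Similarly, two vertices of $C_y$ are at distance at most $2q+1$.
\end{itemize}
A degenerate case needs care: if $p=0$, then $x\in V(H)$ and $P$ contains $x$, which would contradict $P$ being disjoint from $H$. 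So $p,q\ge 1$. Consequently $2p+1\le d$ and $2q+1\le d$, using $\ell\ge 1$ and $q\ge p$. Thus every two vertices of $C_x\cup C_y$ are at distance at most $d$ along $H$.

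\textbf{Step 2: from pairwise closeness to a short arc.} Let $Z$ be a set of vertices of a cycle $H$ with pairwise $H$-distances at most $d$. I want to show $Z$ lies in a subpath with at most $2d$ vertices; it suffices to show $Z$ misses a run of at least $|V(H)|-2d$ consecutive vertices.
\begin{itemize}
\item Fix $z\in Z$. Every other element of $Z$ lies in the ball of radius $d$ around $z$ on $H$, which is a subpath of at most $2d+1$ vertices centered at $z$.
\item To save the final vertex, I first expect to use the stronger bound $|V(H)|>2t+8$ together with the fact that $d$ is small relative to $|V(H)|$. However, $d$ could be large, in which case the claim is trivial whenever $2d\ge |V(H)|$; so I assume $2d<|V(H)|$.
\item Then refine the choice of $z$. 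Take the gaps of $H$ between consecutive elements of $Z$ in cyclic order. If every gap is shorter than $|V(H)|-2d$, pick $z$ and the element $z'$ farthest from it and derive a contradiction with the pairwise bound.
\end{itemize}

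\textbf{Alternative route for Step 2.} Take the two elements $a,b\in Z$ realizing the maximum pairwise distance $D\le d$, and let $B$ be the shorter $a$–$b$ arc. Any $z\in Z$ outside $B$ is within distance $d$ of both $a$ and $b$, so it lies in the extension of $B$ by at most $d-D$ on each side. Therefore $Z$ lies in an arc with at most $D+2(d-D)+1 = 2d-D+1$ vertices. This is at most $2d$ unless $D=0$, i.e., $|Z|=1$, which is trivial.

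I expect this elementary cyclic-geometry step to be the main obstacle, since it requires the short-arc argument to be carried out carefully. The arcs around $a$ and $b$ might wrap around the cycle, but this is excluded because $2d<|V(H)|$.
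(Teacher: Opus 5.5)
\textbf{The gap is in Step~2.} Your overall route is the paper's: bound distances along $H$ using \cref{clm:bounded_cones_length-pre}, then read off a short subpath. The paper uses only cross pairs and asserts the last step without argument. Your within-cone bounds (taking $y=x$, $P=(x)$) are a sensible addition. One small correction there: $2q+1\le d$ does not follow from $\ell\ge 1$ and $q\ge p$. It needs $q-p\le \ell-1$, which holds because $P$ has $\ell-1$ edges and edges of $G_2$ join equal or consecutive BFS layers.

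The condition $2d<|V(H)|$ does not exclude wrap-around, and the pairwise bound $d$ alone cannot give the claim. Take $|V(H)|=3d$ and $Z=\{u_0,u_d,u_{2d}\}$. All pairwise distances equal $d$ and $2d<|V(H)|$, yet every subpath of $H$ containing $Z$ has at least $2d+1$ vertices. In your alternative route with $a=u_0$, $b=u_d$, $D=d$, the vertex $u_{2d}$ is within distance $d$ of both $a$ and $b$. It is still not in $B$ extended by $d-D=0$, because its short route to $a$ goes round the other side of the cycle. Wrap-around is only excluded when $|V(H)|>3d$. The contradiction promised in your first route does not exist either, since this example satisfies every hypothesis you use.

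\textbf{How to fix it.} You cannot save the last vertex from the black-box bound $p+q+\ell$, so sharpen the input instead of the geometry. In the proof of \cref{clm:bounded_cones_length-pre}, $P$ has only $\ell-1$ edges, so $|E(R)|\le (p-1)+(\ell-1)+(q-1)$. Then
$|E(A)|+|E(B)|=|E(H)|\le 2+|E(A')|+|E(R)|\le 2+|E(A)|+|E(R)|$,
which yields $\mathrm{dist}_H(x_1,y_1)\le p+q+\ell-1=d-1$.

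The same count with $y=x$ and $P=(x)$ bounds distances inside the cone of $x$ by $2p\le d-1$. It bounds distances inside the cone of $y$ by $2q\le d-1$, using $q\le p+\ell-1$.

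Now fix any $z\in Z$. All of $Z$ lies within distance $d-1$ of $z$ along $H$, hence in a subpath with at most $2d-1$ vertices. No wrap-around analysis is needed.
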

	\begin{claimproof}
		Let $Z$ be the union of the cones of $x$ and $y$.
		By~\cref{clm:bounded_cones_length-pre}, any two vertices, one belonging to the cone of $x$ and another to the cone of $y$ respectively,
		in $Z$ lie within distance at most $p+q+\ell$ along $H$.
		Hence, $Z$ lies in the subpath of $H$ with at most $2(p+q+\ell)$ vertices.
	\end{claimproof}

	In the further proof, we use the following two corollaries of \cref{clm:bounded_cones_length}:
	
	\begin{claim}\label{cor:small_cone}
    For any $p$, the cone of any vertex $v \in L_p$ is contained in a~subpath of $H$ with at most $4p$ vertices.
	\end{claim}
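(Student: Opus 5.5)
We apply \cref{clm:bounded_cones_length} with $x = y = v$, so $p = q$. The path $P$ is then the single vertex $v$, which gives $\ell = 1$. The resulting bound is $2(2p+1) = 4p+2$ vertices, slightly weaker than the claimed $4p$. Before using the claim we check its hypotheses. For $p \geq 1$, the vertex $v \in L_p$ is not in $L_0 = V(H)$. For $p \geq 2$, $v$ is also non-adjacent to $H$, so the trivial path $P = (v)$ lies in $\bigcup_{i \ge p} L_i$ and is disjoint from and non-adjacent to $H$. For $p = 1$ the vertex $v$ is adjacent to $H$, so we argue directly. For $p = 0$ the cone of $v$ is $\{v\}$, which lies on a subpath with one vertex; the bound $4p = 0$ fails here, so the statement is read for $p \geq 1$.

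To reach the sharper bound $4p$, we do not quote \cref{clm:bounded_cones_length}. Instead we redo the computation of \cref{clm:bounded_cones_length-pre} with $x = y = v$ and $P$ trivial. Take $x_1, y_1$ in the cone of $v$, with vertical paths $Q_x, Q_y$ whose penultimate vertices are $x_2, y_2 \in L_1$. Let $R$ be a shortest $x_2$--$y_2$ path in $G_2[V(Q_x)\cup V(Q_y)\setminus\{x_1,y_1\}]$. Since both vertical paths pass through $v$, their union minus $x_1, y_1$ has at most $2(p-1)+1$ vertices, so $|E(R)| \le 2(p-1) = 2p-2$. This replaces the cruder estimate $(p-1)+(q-1)+\ell$, which would give $2p-1$ here.

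The rest of the earlier argument goes through unchanged. \cref{clm:attach-to-hole} shows that each vertex of $L_1$ attaches to $H$ within three consecutive vertices. From this we get $|E(A')| \ge |E(A)| - 4 \ge t$, where we use $|V(H)| > 2t+8$, so the cycle $H'$ is a long hole. Minimality of $H$ then yields $|E(B)| \le 2 + |E(R)| \le 2p$. Hence every two vertices of the cone are at distance at most $2p$ along $H$.

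We convert this pairwise bound into containment in a subpath. Since $|V(H)| > 2t+8$ is much larger than $4p$ for the relevant $p$ (we only need $p \le h' \le \lceil 5/\beta\rceil$, and otherwise the statement is trivial), we argue as follows. Fix one cone vertex $z$. Every cone vertex lies within distance $2p$ of $z$, so the whole cone lies in the subpath of $H$ centred at $z$ with $4p+1$ vertices. We tighten this to $4p$ vertices as follows. If cone vertices occurred at both distance exactly $2p$ on either side of $z$, those two vertices would be at distance $4p$ from each other along the subpath. If this exceeds $2p$ around the other way as well, the pairwise bound is contradicted. Otherwise $H$ is short, namely $|V(H)| \le 6p$, and then the claim holds trivially, since $H$ minus one vertex is a subpath. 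I expect this off-by-one bookkeeping, together with the degenerate case $p=1$ where $R$ is a single vertex, to be the only real obstacle. If it resists, I would accept $4p+1$, or note that $4p$ is used only up to constants later.
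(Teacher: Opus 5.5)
Your core computation is correct and follows the paper's route. The paper gives no separate argument and reads this claim off \cref{clm:bounded_cones_length} with $x=y=v$ and a trivial $P$. As you noticed, that literally yields $4p+2$, and its hypothesis fails for $p=1$. The case $p=1$ is immediate anyway: the cone of $v\in L_1$ is $N(v)\cap V(H)$, which lies in three consecutive vertices by \cref{clm:attach-to-hole}. Redoing \cref{clm:bounded_cones_length-pre} with both vertical paths through $v$ does give $|E(R)|\le 2p-2$. Hence any two cone vertices are at cyclic distance at most $2p$, and your argument for $|V(H)|>6p$ is fine.

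The gap is the regime $|V(H)|\le 6p$. You say the claim then holds trivially because $H$ minus a vertex is a subpath; this is false. That subpath has $|V(H)|-1$ vertices, which exceeds $4p$ as soon as $|V(H)|>4p+1$. Even for $|V(H)|\le 4p+1$ you would first need the cone to miss some vertex of $H$, and the pairwise bound cannot give this, since it is vacuous once $|V(H)|\le 4p$. Nor can this regime be assumed away: $p$ ranges up to $h'-1<\lceil 5/\beta\rceil$, which grows with $\log n/\varepsilon$, whereas $|V(H)|$ is only known to exceed $2t+8$. Your fallback bound $4p+1$ breaks for the same reason, because the centred arc then wraps around all of $H$. (The paper's own step from pairwise distances to a subpath in \cref{clm:bounded_cones_length} silently skips this regime too.)

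Two ingredients close the gap.

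First, use the path $P$ on $t-1$ vertices from the definition of $X_1$. Every vertex outside $H$ adjacent to $P$ was deleted, so no vertex of $L_1$ sees $P$. Every vertical path ends with an edge from $L_1$ to $H$, so the cone of $v$ avoids $V(P)$ and lies in the path $H-V(P)$ on $|V(H)|-t+1$ vertices. This settles $|V(H)|\le 4p$.

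Second, suppose $|V(H)|>4p$. Build the analogue of $H'$ from $R$ and the \emph{shorter} arc $B$; this is possible when $x_2\neq y_2$ and $|E(B)|\ge 5$.
\begin{itemize}
\item If it were a long hole, minimality of $H$ would give $|E(A)|\le 2+|E(R)|\le 2p$, i.e.\ $|V(H)|\le 4p$, a contradiction.
\item So it is not a long hole, whence $|E(B)|\le t+1-|E(R)|$.
\item Together with $|E(B)|\le 2+|E(R)|$, every pairwise distance is at most $D:=\min\{2p,\max\{4,\lfloor (t+3)/2\rfloor\}\}$.
\end{itemize}
Since $|V(H)|>2t+8\ge 3D$ (for $t\ge 2$), all cone vertices lie on the arc of radius $D$ around a fixed cone vertex. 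Any two of them are at most $2D<|V(H)|-D$ apart along that arc, so their cyclic distance (at most $D$) equals their distance along the arc. Hence the cone fits in a subpath of $D+1\le 2p+1\le 4p$ vertices.
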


	\begin{claim}\label{cor:small-cone-edge}
		Let $v_1 \in L_p$ and $v_2 \in (L_{p-1} \cup L_p) \cap N(v_1)$.
		Then the union of the cones of $v_1$ and $v_2$ is contained in a~subpath of $H$ with at most $4p+1$ vertices.
	\end{claim}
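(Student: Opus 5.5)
We derive \cref{cor:small-cone-edge} directly from \cref{clm:bounded_cones_length}, applied to the edge $v_1v_2$ viewed as a two-vertex path. We set $x \coloneqq v_2$ and $y \coloneqq v_1$, so that $x \in L_{q'}$ with $q' \in \{p-1,p\}$ and $y \in L_p$. The path $P$ is $v_2v_1$, which has $\ell = 2$ vertices and lies in $\bigcup_{i \geq q'} L_i$, as required.

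The first point to check is that $P$ is disjoint and non-adjacent to $H$. This holds whenever $q' \geq 2$, since then both endpoints lie in layers $L_{\geq 2}$. \cref{clm:bounded_cones_length} then gives a subpath with at most $2(q'+p+2) \leq 2(2p+2) = 4p+4$ vertices. This is worse than the claimed $4p+1$, so the crude application does not suffice.

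To sharpen the bound, we go back to \cref{clm:bounded_cones_length-pre}. Its proof shows $|E(R)| \leq (p-1)+(q-1)+\ell'$, where $\ell'$ counts edges of the connecting path. In our situation the walk from $x_2$ up $Q_x$, across the edge $v_2v_1$, and down $Q_y$ to $y_2$ has at most $(q'-1)+1+(p-1) \leq 2p-1$ edges. Hence $|E(H')| \leq 2+|E(A')|+2p-1$ and $|E(A')| \leq |E(A)|$. The shortest-hole inequality then gives $|E(B)| \leq 2p+1$ for any $x_1$ in the cone of $v_2$ and $y_1$ in the cone of $v_1$.

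The same argument applied to two vertices of a single cone, with $x = y = v_1$ and $\ell=1$, gives pairwise distance at most $2p-1$ within the cone of $v_1$. The analogous bound holds within the cone of $v_2$. So every pair in the union of the two cones lies within distance $2p+1$ along $H$. Since $|V(H)| > 2t+8$, this forces the union into an arc. A set on a cycle in which every pair is within distance $d$ lies in a subpath with at most $2d+1$ vertices, but we want the sharper $4p+1$. We would handle this by taking the extreme points of the union in one direction, which lie within distance $2p+1$ of one another, and then bounding the span on both sides, so that the span is at most $4p$ edges, i.e.\ $4p+1$ vertices.

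The boundary cases $p=1$ (where $v_2 \in L_0 = V(H)$ or $v_2 \in L_1$) are handled separately. If $v_2 \in L_0$, the cone of $v_2$ is $\{v_2\}$, a neighbour of $v_1$. By \cref{clm:attach-to-hole}, the neighbours of $v_1$ on $H$ lie within three consecutive vertices, and its cone is exactly these neighbours, so the union fits in at most $3 \leq 5$ vertices. If $v_1,v_2 \in L_1$ are adjacent, the general argument applies with $R = v_2v_1$.

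The main obstacle is the bookkeeping of the exact constant $4p+1$, i.e.\ making sure that the edge-count slack in \cref{clm:bounded_cones_length-pre} is used tightly rather than via the cruder $2(p+q+\ell)$ bound.
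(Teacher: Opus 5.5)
Your sharper edge count for cross pairs is correct. The walk from $x_2$ up to $v_2$, across $v_2v_1$, and down to $y_2$ has at most $2p-1$ edges, so $|E(B)|\le 2p+1$. Within a single cone the same computation gives $|E(B)|\le 2+(2p-2)=2p$, not $2p-1$; this slip is harmless because you only use the uniform bound $2p+1$.

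The genuine gap is the last step, going from ``every pair in the union $Z$ of the two cones is within distance $2p+1$ along $H$'' to ``$Z$ lies in a subpath with at most $4p+1$ vertices''. Pairwise distances at most $d$ only give a subpath with $2d+1=4p+3$ vertices. Improving this to a span of at most $d$ edges works once $|V(H)|>3d$, but $|V(H)|>2t+8$ gives no such control: $p$ ranges up to about $\lceil 5/\beta\rceil$, which is unrelated to $t$. In fact no argument that uses only your pairwise bounds can succeed. Take $|V(H)|=6p+2$ with vertices $u_0,\dots,u_{6p+1}$ in cyclic order, $u_0,u_{4p+2}$ in the cone of $v_1$, and $u_{2p+1}$ in the cone of $v_2$. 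The within-cone distance is $2p$ and the cross distances are $2p+1$, yet every subpath of $H$ containing these three vertices has at least $4p+2$ vertices. So the ``extreme points'' sketch cannot be completed.

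The missing idea is to also use the cycle through the \emph{short} arc $B$, trimmed to $B'$ with $|E(B')|\ge |E(B)|-4$; \cref{clm:bounded_cones_length-pre} ignores this cycle. If it is a long hole, minimality of $H$ gives $|E(A)|\le |E(R)|+2$. Together with $|E(B)|\le|E(R)|+2$ this yields $|V(H)|\le 2|E(R)|+4\le 4p+2$. Otherwise, either $|E(B)|\le 5$, or the cycle has length less than $t$. The latter gives $|E(B)|+|E(R)|\le t+1$ and hence $|E(B)|\le (t+3)/2$. Since $|V(H)|>2t+8$ exceeds three times $\max(5,(t+3)/2)$, the whole union then lies in an arc of at most $2p+1$ edges. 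This settles $|V(H)|\ge 4p+3$, and $|V(H)|\le 4p+1$ is trivial. For $|V(H)|=4p+2$, run the same dichotomy inside each single cone, where $|E(R)|\le 2p-2$ excludes the first alternative. Each cone then lies in an arc of at most $\max(6,(t+5)/2)$ vertices, so the two cones together have fewer than $2t+8<|V(H)|$ vertices, and $Z$ fits in $4p+1$ vertices. This is also what rules out the configuration above: the two vertical paths from $v_1$ and the short arc between $u_{4p+2}$ and $u_0$ yield a long hole of length at most $4p<|V(H)|$.

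For comparison, the paper gives no separate argument. It lists this claim as a direct corollary of \cref{clm:bounded_cones_length}, i.e.\ the same pairwise-distance route with loose constants. Its later use in \cref{clm:two-strata-disconnecting} only needs the union to fit in a subpath with at most $b+1$ vertices.
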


	For every $u \in V(H)$, let $D(u) \subseteq V(G_2)$ be the vertices whose cone contains~$u$.
	Let $u_0, u_1, \ldots, u_{|V(H)|-1}$ be a~numbering of the vertices of $H$ along the hole $H$.
	We set 
	\[
    D_{a,b} \coloneqq \bigcup_{k \in [a,a+b-1]} D(u_{k \bmod |V(H)|}) \quad \text{ and } b \coloneqq 4 \left\lceil \frac{5}{\beta} \right\rceil + 1.
	\]
	As $p\leq h'\leq \left\lceil \tfrac{5}{\beta} \right\rceil$, so $4p < b$. The following claim is a direct consequence of~\cref{cor:small_cone}.
	\begin{claim}\label{clm:disjoint-Dab}
		Two sets $D_{a,b}$ and $D_{a',b}$ are disjoint if $|a-a'| \bmod |V(H)| > 2b$.
	\end{claim}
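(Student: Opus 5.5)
We argue by contradiction, using \cref{cor:small_cone}. Suppose some vertex $v \in V(G_2)$ lies in both $D_{a,b}$ and $D_{a',b}$, where the cyclic distance between $a$ and $a'$ exceeds $2b$; we read the hypothesis $|a-a'| \bmod |V(H)| > 2b$ as saying that both $a'-a$ and $a-a'$, taken modulo $|V(H)|$, exceed $2b$. By the definition of $D(\cdot)$, the cone of $v$ then contains some $u_k$ with $k \in [a,a+b-1]$ and some $u_{k'}$ with $k' \in [a',a'+b-1]$, indices taken modulo $|V(H)|$.

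The first step bounds the cone of $v$. Since $v \in L_p$ for some $p \le h'-1 < \lceil 5/\beta\rceil$, \cref{cor:small_cone} says its cone lies in a subpath of $H$ with at most $4p < b$ vertices. Vertices of $H$ itself, i.e. $p=0$, are harmless: a vertex $u \in L_0$ has cone $\{u\}$, because a vertical path from $u$ to $x \in V(H)$ meets only layer $L_0$ and so is the single vertex $u$. Hence $u_k$ and $u_{k'}$ are at distance less than $b$ along $H$.

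The second step is an interval estimate. The windows $[a,a+b-1]$ and $[a',a'+b-1]$ are separated by cyclic gaps of length at least $(a'-a) - (b-1) > b+1$ and at least $(a-a') - (b-1) > b+1$ on the two sides, working modulo $|V(H)|$. So any vertex of the first window and any vertex of the second are at cyclic distance more than $b$ along $H$. This contradicts the first step, so $D_{a,b}$ and $D_{a',b}$ are disjoint.

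The only delicate points are the off-by-one bookkeeping between the number of vertices in a subpath and distances along $H$, and the interpretation of ``$|a-a'| \bmod |V(H)|$'' as the cyclic distance. The threshold $2b$ leaves ample slack for both, since a cone fits in fewer than $b$ consecutive vertices.
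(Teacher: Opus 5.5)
Your proof is correct and takes the paper's route. The paper simply says the claim is a direct consequence of \cref{cor:small_cone}: a common vertex of $D_{a,b}$ and $D_{a',b}$ would have a cone meeting two windows more than $b$ apart along $H$, although a cone fits in fewer than $b$ consecutive vertices. Your separate treatment of $L_0$ and your cyclic-distance reading of the hypothesis are the details the paper leaves implicit. One small slip: when $h \le \lceil 5/\beta\rceil$ the paper sets $h' = h+1$, so $h'-1 < \lceil 5/\beta\rceil$ can fail. Nothing breaks, since $p \le \lceil 5/\beta\rceil$ still gives $4p < b = 4\lceil 5/\beta\rceil+1$.
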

	
	The removal of $D_{a,b} \cup D_{a',b}$ disconnects~$G$, provided $D_{a,b}$ and $D_{a',b}$ are disjoint.
	
	\begin{claim}\label{clm:two-strata-disconnecting}
		Let $a, a' \in [0,|V(H)|-1]$ be such that ${|a-a'| \bmod |V(H)|} > 2 b$.
		Then the removal of $D_{a,b} \cup D_{a',b}$ disconnects~$\tilde{G}$.
		In particular $u_{a+b \bmod |V(H)|}$ and $u_{a'+b \bmod |V(H)|}$ are in distinct connected components of $\tilde{G}-(D_{a,b} \cup D_{a',b})$.
	\end{claim}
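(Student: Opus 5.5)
The plan is to prove the ``in particular'' statement directly, since it implies the first one. Write $\tilde{G}$ for $G_2$ and let $W \coloneqq \{u_{k \bmod |V(H)|} : k \in [a,a+b-1]\}$ and $W' \coloneqq \{u_{k \bmod |V(H)|} : k\in[a',a'+b-1]\}$ be the two windows of $b$ consecutive vertices of $H$. Two easy facts come first.

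\emph{Cones of hole vertices.} For $u \in V(H)=L_0$ the cone of $u$ is exactly $\{u\}$, because a vertical path starting in $L_0$ and ending in $L_0$ is a single vertex. Hence $D(u)\cap V(H)=\{u\}$.

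\emph{The two arcs.} Since $|a-a'| \bmod |V(H)| > 2b$, the windows $W$ and $W'$ are disjoint. Moreover $H - (W\cup W')$ splits into two nonempty arcs: $A_1$, which runs from $u_{a+b}$ to $u_{a'-1}$, and $A_2$, which runs from $u_{a'+b}$ to $u_{a-1}$ (indices mod $|V(H)|$). By the first fact, $u_{a+b}\in A_1$ and $u_{a'+b}\in A_2$ are not removed, because neither lies in $W \cup W'$.

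Now suppose, for a contradiction, that $Q = v_0v_1\cdots v_m$ is a path in $\tilde{G}-(D_{a,b}\cup D_{a',b})$ with $v_0=u_{a+b}$ and $v_m=u_{a'+b}$. Every vertex $v\in L_p$ has a nonempty cone, since one can walk down through the BFS layers to $L_0$. No $v_i$ lies in $D_{a,b}\cup D_{a',b}$, so every cone $\mathrm{cone}(v_i)$ avoids $W\cup W'$. Therefore each cone lies in $A_1\cup A_2$.

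The key step is to show that $\mathrm{cone}(v_i)\cup\mathrm{cone}(v_{i+1})$ lies entirely in one arc. Adjacent vertices lie in the same or in consecutive layers. So by \cref{cor:small-cone-edge} (applied with $p$ the larger layer index, $p\le h'-1\le\lceil 5/\beta\rceil$), this union is contained in a subpath $S$ of $H$ with at most $4p+1\le b$ vertices. Consider the vertices of the union that realise the endpoints of $S$. If one of them were in $A_1$ and another in $A_2$, then the part of $S$ between them would have to pass through a whole window $W$ or $W'$. That part would contain $b$ window vertices plus one arc vertex on each side, hence at least $b+2$ vertices. This contradicts $|S|\le b$. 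So the union lies in a single arc. (The possibility that the subpath wraps around through the other window is excluded the same way.)

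By induction along $Q$, all cones $\mathrm{cone}(v_i)$ therefore lie in the same arc. But $\mathrm{cone}(v_0)=\{u_{a+b}\}\subseteq A_1$ and $\mathrm{cone}(v_m)=\{u_{a'+b}\}\subseteq A_2$, which is a contradiction. This proves the claim.

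The main obstacle I expect is the bookkeeping in this step: one must check that the bound $4p+1\le b$ (which uses $p\le h'-1$ and the choice of $b$) really prevents a short subpath from ``jumping over'' a window, including the wrap-around case modulo $|V(H)|$.
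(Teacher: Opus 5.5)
Your proof is correct and follows the paper's own argument. The paper also shows that no vertex of $G_2-(D_{a,b}\cup D_{a',b})$ whose cone meets the arc $[a+b,a'-1]$ is adjacent to one whose cone meets the arc $[a'+b,a-1]$ (indices mod $|V(H)|$), via \cref{cor:small-cone-edge}; this is exactly your induction along a hypothetical path. Your extra checks (singleton cones on $H$, nonempty cones, $4p+1\le b$) spell out details the paper leaves implicit.

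Two small remarks. The hypothesis has to be read as the cyclic distance between $a$ and $a'$ exceeding $2b$ in both directions, which you assume tacitly. And for two adjacent vertices of $L_0$ the union of their cones is just an edge of $H$, so the case $p=0$ needs no appeal to \cref{cor:small-cone-edge}.
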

	\begin{claimproof}
		Let $G_3 \coloneqq G_2-(D_{a,b} \cup D_{a',b})$, and assume without loss of generality that $a < a'$.
		We want to argue that $G_3$ is disconnected.
		By~\cref{cor:small-cone-edge}, no vertex of $V(G_3) \cap \bigcup_{j \in [a+b,a'-1]} D(u_j)$ can be adjacent to a~vertex of $V(G_3) \cap \bigcup_{j \in [0,a-1] \cup [a'+b,|V(H)|-1]} D(u_j)$.
		Thus in particular $u_{a+b}$ and $u_{a'+b \bmod |V(H)|}$ are in distinct connected components of~$G'$.
	\end{claimproof}
	
	Let $n_2 \coloneqq |V(G_2)|$.
	Fix $0 \leqslant a_1 \leqslant a_2 \leqslant |V(H)|-1$ minimizing $a_2-a_1$ subject to
  \begin{align}\label{eq:count}
    \left|\bigcup_{j \in [a_1,a_2]} D(u_j)\right| \geqslant \frac{n_2}{6}.
  \end{align}
  Observe that $a_1=a_2$ or 
  \begin{align}\label{eq:count2}
  \left|\bigcup_{j \in [a_1,a_2]} D(u_j)\right| < \frac{n_2}{3}.
  \end{align}

	We set $	d \coloneqq 2b \cdot \left\lceil \frac{5}{\beta} \right\rceil$, and distinguish two cases: $a_2-a_1 \leqslant d$ or $a_2-a_1 > d$.
	
	\medskip
	
\noindent	\textbf{Case: \boldmath $a_2-a_1 \leqslant d$.} Let us consider the family of sets \[
	\mathcal F_3 \coloneqq \{D_{a_1-d,b}, D_{a_1-d+2b,b}, D_{a_1-d+4b,b}, \ldots, D_{a_1-4b,b}, D_{a_1-2b,b}\}.
	\]
	By~\cref{clm:disjoint-Dab}, the $\lceil \frac{5}{\beta} \rceil$ sets of $\mathcal F_3$ are pairwise disjoint, thus there exists $X_3 \in \mathcal F_3$ such that $\wei(X_3 \cap I') \leqslant \frac{\beta}{5} \wei(I')$.
	For the same reason, there exists 
	\[
	X_4 \in \mathcal F_4 \coloneqq \{D_{a_2+1,b}, D_{a_2+1+2b,b}, D_{a_2+1+4b,b}, \ldots, D_{a_2+d+1-4b,b}, D_{a_2+d+1-2b,b}\}
	\] such that $\wei(X_4 \cap I') \leqslant \frac{\beta}{5} \wei(I')$.
	We iterate over all choices of $X_3$ and $X_4$.
	Finally, we exploit the fact that $a_2-a_1$ is small by adding $X_5 \coloneqq \bigcup_{j \in [a_1,a_2]} N(u_j)$ to the set of vertices to remove.
	As $d \cdot \gamma \leqslant \frac{\beta}{5}$, for the correct choices
	of $I_0$ and $h'$ we have $\wei(X_5 \cap I') \leqslant \frac{\beta}{5} \wei(I')$.
	We define $X \coloneqq X_0 \cup X_1 \cup X_2 \cup X_3 \cup X_4 \cup X_5$
	and insert $(X,I_0)$ into $\mathcal{X}$.
	We have $\wei(X \cap (I' \setminus I_0)) \leq \beta \wei(I')$.

	We already argued that the connected components of $G'-X$ disjoint from $V(G_2)$ satisfy Property~\ref{p:hole}.
	By~\cref{clm:two-strata-disconnecting}, $G_2-(X_3 \cup X_4 \cup X_5)$ has two sets of connected components: those intersecting $\bigcup_{j \in [a_1,a_2]} D(u_j)$, and those not.
	The former kind are not adjacent to $H$, so satisfy Property~\ref{p:hole}.
  The latter kind has at~most $5n_2/6 \leqslant 5|V(G')|/6$ vertices, by design of the interval $[a_1,a_2]$ using \eqref{eq:count},
	so satisfy Property~\ref{p:size}.
	
	\medskip
\noindent\textbf{Case: \boldmath $a_2-a_1 > d$.} Recall that by \eqref{eq:count2} we have $|\bigcup_{j \in [a_1,a_2]} D(u_j)| < n_2/3$.
	Then $\bigcup_{j \in [a_1-d,a_1-1]} D(u_j)$ and $\bigcup_{j \in [a_2+1,a_2+d]} D(u_j)$ contains each less than $n_2/6$ vertices, by the minimality of~$a_2-a_1$.
	Hence $|\bigcup_{j \in [a_1-d,a_2+d]} D(u_j)|<2n_2/3$.
	By the pigeonhole principle, as $d = 2b \cdot \lceil \frac{5}{\beta} \rceil$,
	there is $a \in \{a_1-d,a_1-d+2b, \ldots,a_1-2b\}$ (resp.,~$a' \in \{a_2+2b,a_2+4b,\ldots,a_2+d\}$) 
	such that $\wei(D_{a,b} \cap I') \leqslant \frac{\beta}{5} \wei(I')$ 
	(resp.,~$\wei(D_{a',b} \cap I') \leqslant \frac{\beta}{5} \wei(I')$).
	We iterate over all choices of $a$ and $a'$
	and set $X_3 \coloneqq D_{a,b}$, $X_4 \coloneqq D_{a',b}$.
	We take $X \coloneqq X_0 \cup X_1 \cup X_2 \cup X_3 \cup X_4$
	and insert $(X,I_0)$ into $\mathcal{X}$.
	For the correct choice of $I_0$, $h'$, $a$, $a'$
	we have $\wei(X \cap (I' \setminus I_0)) \leq \beta \wei(I')$.
	
	Observe finally that $G_2-(X_3 \cup X_4)$ has
	no connected component of size larger than $2n_2/3$, hence larger than $5|V(G')|/6$.
	This completes the proof of~\cref{clm:set_X} and thus, of~\cref{thm:mwis}.
\end{proof}

\section{\boldmath QPTAS for $(\tw \leq r,\psi)$-\MWIS: Proof of \cref{thm:pack}}\label{sec:pack}
The proof of \cref{thm:pack} is based on the approach introduced by Gartland et al.~\cite[Section~4]{DBLP:conf/stoc/GartlandLPPR21}.
For a graph $G$, we define its \emph{blob graph} $G^\circ$ as follows:
\begin{align*}
V(G^\circ) = & \{ X \subseteq V(G) ~|~G[X] \text{ is connected}\}\\
E(G^\circ) = & \{ XY ~|~ G[X \cup Y] \text{ is connected}\}.
\end{align*}
Equivalently, edges join sets that are either non-disjoint, or there is an edge from one set to another.
Gartland et al.~\cite{DBLP:conf/stoc/GartlandLPPR21} proved that for every $t \geq 4$, a graph $G$ is $C_{\geq t}$-free if and only if $G^\circ$ is $C_{\geq t}$-free.
First, let us extend this result to the setting of $H$-induced-minor-free graphs, where every component of $H$ is a hole.
The proof closely follows the proof of Paesani et al. for the case if $H$ is a linear forest~\cite{DBLP:journals/siamdm/PaesaniPR22}.

\begin{theorem}\label{thm:blobimfree}
Let $H$ be a graph whose every component is a hole.
The graph $G$ contains $H$ as an induced minor if and only if $G^\circ$ contains $H$ as an induced minor.
\end{theorem}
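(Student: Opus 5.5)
The forward direction is straightforward. $G$ is an induced subgraph of $G^\circ$: the singletons $\{v\}$ induce a copy of $G$, since $\{u\}\{v\}$ is an edge of $G^\circ$ exactly when $uv\in E(G)$. Hence any induced minor model of $H$ in $G$ is also one in $G^\circ$.

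For the converse, assume $G^\circ$ contains $H = C_{t_1}+\dots+C_{t_s}$ as an induced minor. Since every component of $H$ is a cycle, this means $G^\circ$ contains an induced subgraph consisting of pairwise non-adjacent holes $C^\circ_1,\dots,C^\circ_s$ with $|V(C^\circ_i)|\ge t_i$. Indeed, an induced minor model of a cycle $C_k$ yields a hole of length at least $k$: pick a shortest path through each branch set between its two neighbouring branch sets. The aim is to find in $G$ pairwise non-adjacent holes $C_1,\dots,C_s$ with $|V(C_i)|\ge t_i$; equivalently, pairwise disjoint, non-adjacent connected sets $Y_1,\dots,Y_s$ such that $G[Y_i]$ contains $C_{t_i}$ as an induced minor.

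For a single hole I use the result of Gartland et al.\ quoted above, in its constructive form (as in their proof and in the linear-forest argument of Paesani et al.). Given a hole $C^\circ$ in $G^\circ$ with vertices (blobs) $X_1,\dots,X_k$, $k\ge t$, there is a hole of length at least $t$ in $G$ contained in $\bigcup_j X_j$. The key point is locality: the hole found lies inside $U_i:=\bigcup_{X\in V(C^\circ_i)} X$.

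Now take distinct components $C^\circ_i$ and $C^\circ_{i'}$. They are non-adjacent in $G^\circ$, so for every blob $X$ of the first and $Y$ of the second, $G[X\cup Y]$ is disconnected. This forces $X\cap Y=\emptyset$ and no edge between $X$ and $Y$. Hence $U_i$ and $U_{i'}$ are disjoint and non-adjacent in $G$. The holes $C_i\subseteq U_i$ obtained for each $i$ are therefore pairwise disjoint and non-adjacent, and together they form an induced copy of $C_{\ge t_1}+\dots+C_{\ge t_s}$, which contracts to $H$. Contracting each long cycle to the right length is routine.

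The main obstacle is the single-hole step. If I cannot simply cite the Gartland et al.\ proof as being local, I would reprove it directly: given a hole $X_1,\dots,X_k$ in $G^\circ$ with $k\ge t$, show $G[\bigcup X_j]$ contains a $C_{\ge t}$. I would try the following first. Consecutive blobs are adjacent in $G^\circ$ and non-consecutive blobs are disjoint and non-adjacent in $G$, so each $X_j$ can meet only $X_{j\pm1}$. Choose vertices $p_j\in X_j\cup X_{j+1}$ witnessing the connection of $X_j$ and $X_{j+1}$ (a common vertex or an edge). Then join $p_{j-1}$ to $p_j$ by a shortest path inside $X_j$. The concatenation is a closed walk passing through all $k$ blobs in cyclic order. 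Because non-consecutive blobs are disjoint and non-adjacent, every chord or repetition lies within $X_j\cup X_{j+1}$ for some $j$. Shortcutting these locally yields an induced cycle that still visits each $X_j$, and hence has at least $k\ge t$ vertices. The care needed is in the shortcutting when $k$ is small relative to the overlaps, but the non-adjacency of non-consecutive blobs makes this local.
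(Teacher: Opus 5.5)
Your proof is correct and is essentially the paper's argument. Both reduce to the single-hole result of Gartland et al.\ applied inside the union of the blobs of each cycle, and both use that blobs from different components are disjoint and non-adjacent in $G$; the paper phrases this as an induction on the number of components, peeling off one cycle at a time, whereas you handle all cycles at once. Your locality worry disappears if you apply the black-box result to $G[U_i]$, since $C^\circ_i$ is an induced subgraph of $(G[U_i])^\circ$ --- this is exactly the paper's step --- so the fallback reproof sketch is not needed.
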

\begin{proof}
As $G$ is an induced subgraph of $G^\circ$, the forward implication is immediate.
We prove the backward implication by induction on the number $s$ of connected components of $H$.
The case $s=1$ was shown by Gartland et al.~\cite{DBLP:conf/stoc/GartlandLPPR21}.
Thus, assume that $s > 1$ and let $C$ be a~connected component of $H$.
Let $H' = H - C$.

Suppose that $G^\circ$ contains $H$ as an induced minor.
Fix one induced minor model of $H$ in $G$ and let $\cX$ be the set of vertices of the model.
This means that $G^\circ[\cX]$ has $s$ components, each of which is a cycle,
and there is a one-to-one mapping from the components of $G^\circ[\cX]$ and components of $H$,
so that each cycle is mapped to a cycle of at most its own length.

Let $\cY \subseteq \cX$ be the vertices of $G^\circ$ that form the cycle mapped to $C$;
each element of $\cY$ is a subset of $V(G)$.
Let $Y \subseteq V(G)$ be the union of all sets in $\cY$.
Note that $G^\circ[\cY]$ is an induced subgraph of $(G[Y])^\circ$.
Thus, by the inductive assumption, $G[Y]$ contains an induced cycle with at least $|V(C)|$ vertices.

Now let $X \subseteq V(G)$ be the union of sets in $\cX \setminus \cY$.
Since $\cY$ is the vertex set of one component of $G^\circ[\cX]$, there are no edges between $\cX \setminus \cY$ and $\cY$.
Consequently, there are no edges in $G$ between $X$ and $Y$, and thus $G^\circ[X \setminus Y]$ is an induced subgraph of $(G-N[Y])^\circ$.

Since $G^\circ[\cX \setminus \cY]$, and thus $(G - N[Y])^\circ$, contains $H'$ as an induced minor,
by the inductive assumption we know that $G-N[Y]$ contains $H'$ as an induced minor.
Combining its model with the model of $C$ in $G[Y]$, we obtain an induced minor model of $H$ in $G$.
\end{proof}

Let $\mathcal{H}$ be a class of graphs.
An \emph{induced $\mathcal{H}$-packing} in a graph $G$ is a set $S \subseteq V(G)$ such that every component of $G[S]$ belongs to $\mathcal{H}$.
For example, if $\cH = \{ K_1 \}$, then induced $\mathcal{H}$-packing in $G$ if and only if it is an independent set.

The construction of blob graphs allows us to reduce the problem of finding an induced $\cH$-packing of maximum size (or weight) in $G$ to solving the \MWIS problem in an appropriate induced subgraph of $G^\circ$.
This is expressed in the following lemma whose proof is immediate, see e.g.,~\cite[Section 4]{DBLP:conf/stoc/GartlandLPPR21}.

\begin{lemma}\label{lem:useblobs}
Let $G$ be a graph, $\wei : V(G) \to \RR_{\geq 0}$ be a weight function, and $\cH$ be any class of graphs.
Let $\cX = \{ X \subseteq V(G) ~|~ G[X] \in \cH \text{ and is connected}\}$.
Let $\wei^\circ : \cX \to \RR_{\geq 0}$ be defined as $\wei^\circ(X) = \sum_{v \in X} \wei(v)$.
Then the maximum weight of an induced $\cH$-packing in $G$ is equal to $\alpha(G^\circ[\cX],\wei^\circ)$.
\end{lemma}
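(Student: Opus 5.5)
We prove the equality by comparing feasible solutions on the two sides in both directions. The one point to watch is that "induced $\cH$-packing" means every component of $G[S]$ lies in $\cH$. Hence the natural decomposition of $S$ is into the vertex sets of the components of $G[S]$, and each such set is a connected set inducing a graph of $\cH$, i.e., an element of $\cX$.

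\emph{($\leq$).} Let $S$ be an induced $\cH$-packing and let $X_1,\dots,X_k$ be the vertex sets of the components of $G[S]$. Each $X_i$ is connected and $G[X_i]\in\cH$, so $X_i\in\cX$. For $i\neq j$ the sets $X_i$ and $X_j$ are disjoint. They are also non-adjacent in $G$, since otherwise they would lie in the same component of $G[S]$. Hence $G[X_i\cup X_j]$ is disconnected, so $X_iX_j\notin E(G^\circ)$. Thus $\{X_1,\dots,X_k\}$ is an independent set in $G^\circ[\cX]$. Its $\wei^\circ$-weight is $\sum_i \wei(X_i)=\wei(S)$, because the $X_i$ partition $S$.

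\emph{($\geq$).} Let $\{X_1,\dots,X_k\}$ be an independent set in $G^\circ[\cX]$ and put $S=\bigcup_i X_i$. Independence in $G^\circ$ means $G[X_i\cup X_j]$ is disconnected for $i\ne j$. Since both $G[X_i]$ and $G[X_j]$ are connected, this forces $X_i\cap X_j=\emptyset$ and no edge between $X_i$ and $X_j$. Therefore $G[S]$ is the disjoint union of the $G[X_i]$, each connected, so its components are exactly the $G[X_i]$, all in $\cH$. So $S$ is an induced $\cH$-packing, and disjointness gives $\wei(S)=\sum_i\wei^\circ(X_i)$.

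Taking maxima over both directions yields the claimed equality. No step is a real obstacle. The only place requiring care is in ($\geq$): non-adjacency in $G^\circ$ must be unpacked into disjointness plus no edges, which relies on each blob being connected, so that the weights add without double counting.
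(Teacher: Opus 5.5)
Your proof is correct and is the standard argument the paper intends. The paper omits the proof as immediate (citing Gartland et al.), and the omitted reasoning is this two-way correspondence: components of an induced $\cH$-packing map to independent sets of $G^\circ[\cX]$, whose members are pairwise disjoint, non-adjacent connected blobs, and back.
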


The problem with using \cref{lem:useblobs} is that $\cX$ might be very large, so the running time of the algorithm solving \MWIS on $(G^\circ[\cX],\wei^\circ)$ is not bounded by a moderate function of $|V(G)|$.

A class $\cH$ of graphs is \emph{weakly hyperfinite} if for every $\delta > 0$ there exists $c_\delta \in \NN$ such that for any $G \in \cH$ there exists a set $S \subseteq |V(G)|$ of size at most $\delta \cdot |V(G)|$ so that every component of $G - S$ has at most $c_\delta$ vertices~\cite[Section 16.2]{DBLP:books/daglib/0030491}.
The following theorem follows the result of Gartland et al.~\cite{DBLP:conf/stoc/GartlandLPPR21}. Again, we include the proof for completeness.

\begin{theorem}\label{thm:hyperfinite}
Let $\mathcal{H}$ be a non-empty hereditary weakly hyperfinite class of graphs.
Let $s,t$ be positive integers and $\varepsilon \in (0,1)$ be a real.
There is an algorithm that, given a graph $G$, in quasipolynomial time returns one of the following outputs:
\begin{itemize}
\item an induced minor model of $sC_t$ in $G$, or
\item an induced $\mathcal{H}$-packing of size at least $(1-\varepsilon)$ times the size of a largest induced $\mathcal{H}$-packing in $G$.
\end{itemize}
\end{theorem}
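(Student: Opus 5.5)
The plan is to follow the blob-graph strategy of Gartland et al.: reduce to \MWIS on a polynomial-size induced subgraph of $G^\circ$ and run the algorithm of \cref{thm:mwis} on it. Fix $\delta \coloneqq \varepsilon/2$ and the constant $c \coloneqq c_\delta$ from weak hyperfiniteness of $\cH$. Let $\cX_c$ be the family of all sets $X \subseteq V(G)$ with $|X| \leq c$, $G[X]$ connected and $G[X] \in \cH$. Since $c$ is a constant, $|\cX_c| \leq n^{c}$, and $\cX_c$ can be listed in polynomial time. This requires testing membership $G[X]\in\cH$ on constant-size graphs; I will assume this is possible, or else replace it by a fixed finite lookup table of members of $\cH$ of size at most $c$, which is harmless since $c$ depends only on $\cH$ and $\varepsilon$. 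Build the weighted graph $(G^\circ[\cX_c], \wei^\circ)$ with unit weights $\wei \equiv 1$ on $V(G)$, so that $\wei^\circ(X)=|X|$. It has at most $n^c$ vertices, so a quasipolynomial algorithm in its size is still quasipolynomial in $n$.

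Next, run the algorithm of \cref{thm:mwis} on $G^\circ[\cX_c]$ with precision $\varepsilon/2$ and parameters $s,t$. If it returns an induced minor model of $sC_t$ in $G^\circ[\cX_c]$, this is also a model in $G^\circ$. By \cref{thm:blobimfree}, $G$ then contains $sC_t$ as an induced minor. To output an explicit model rather than mere existence, I would make the proof of \cref{thm:blobimfree} constructive. The base case of Gartland et al. gives, from a long induced cycle of blobs, a long hole in the union $Y$ of those blobs. This hole can be found directly by \cref{lem:find-hole} on $G[Y]$, which must contain one. The induction then deletes $N[Y]$ and recurses on the remaining blobs, all in polynomial time. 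This constructive step is the part I expect to need the most care, but it is a direct unrolling of the given inductive proof.

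Otherwise, we get an independent set $\cI$ in $G^\circ[\cX_c]$ of weight at least $(1-\varepsilon/2)\alpha(G^\circ[\cX_c],\wei^\circ)$. Since the blobs in $\cI$ are pairwise disjoint and non-adjacent, their union $S=\bigcup\cI$ is an induced $\cH$-packing of size $\wei^\circ(\cI)$; every component of $G[S]$ is some $G[X]\in\cH$. For the approximation guarantee, let $S^\ast$ be a largest induced $\cH$-packing. For each component $K$ of $G[S^\ast]$, weak hyperfiniteness gives $Z_K$ with $|Z_K|\leq\delta|K|$ such that all components of $K-Z_K$ have at most $c$ vertices. These components lie in $\cH$ because $\cH$ is hereditary. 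Removing all $Z_K$ leaves an induced $\cH$-packing of size at least $(1-\delta)|S^\ast|$ whose components are all in $\cX_c$, and which is pairwise non-adjacent, hence independent in $G^\circ[\cX_c]$. Thus $\alpha(G^\circ[\cX_c],\wei^\circ)\geq(1-\varepsilon/2)|S^\ast|$, and $|S|\geq(1-\varepsilon/2)^2|S^\ast|\geq(1-\varepsilon)|S^\ast|$.
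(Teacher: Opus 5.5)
Your argument is the paper's proof: restrict the blob graph to the connected members of $\cH$ on at most $c_{\varepsilon/2}$ vertices, run \cref{thm:mwis} with precision $\varepsilon/2$ and unit weights on $V(G)$, and use weak hyperfiniteness together with heredity to get $\alpha(G^\circ[\cX_c],\wei^\circ)\geq(1-\varepsilon/2)|S^\ast|$. Two of your remarks are sensible additions that the paper leaves implicit: testing membership in $\cH$ via a finite table, and extracting an explicit model in $G$ by applying \cref{lem:find-hole} to the union of each cycle of blobs.

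The one step that fails as written is the appeal to \cref{thm:blobimfree}. That theorem is stated only for graphs whose components are holes, so it needs $t\geq 4$, and for $t=3$ the implication you use is false. For every edge $uv$ of $G$, the blobs $\{u\}$, $\{v\}$, $\{u,v\}$ form a triangle in $G^\circ[\cX_c]$ whenever $K_2\in\cH$ and $c\geq 2$ (e.g.\ for $\cH$ the class of forests). So for $G=sK_2$, the algorithm of \cref{thm:mwis} may legitimately return an $sC_3$ model in $G^\circ[\cX_c]$ although $G$ is a forest, and your unrolling would then find no hole in $G[Y]$. The repair is the paper's opening line, assuming without loss of generality that $t\geq 4$. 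Run everything with $\max(t,4)$ in place of $t$. If $t=3$, convert the resulting $sC_4$ model in $G$ into an $sC_3$ model by merging two consecutive branch sets in each cycle.
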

\begin{proof}
Without loss of generality assume that $t \geq 4$.
Let $\delta = \varepsilon/2$ and let $c_{\delta}$ be the constant witnessing that $\cH$ is weakly hyperfinite.
Let $G$ be an $n$-vertex instance instance of the problem.
Let 
\[
\cX = \{ X \subseteq V(G) ~|~ |X| \leq c_\delta \text{ and } G[X] \text{ is a connected graph from } \cH\}.
\]
Consider the graph $G^\circ[\cX]$; note that it has $\Oh(n^{c_\delta})$ vertices and can be constructed in time polynomial in $n$ as $c_\delta$ is a constant.
We call the algorithm from \cref{thm:mwis} for $G^\circ[\cX]$, weigths $\wei^\circ$ defined as in \cref{lem:useblobs},
and precision $\varepsilon/2$. It running time is quasipolynomial in  $\Oh(n^{c_\delta})$ and thus in $n$.

If an induced minor of $sC_t$ is reported, we report an induced minor of $sC_t$ in $G$.
Indeed, recall that $G^\circ[\cX]$ is an induced subgraph of $G^\circ$.
Consequently, by \cref{thm:blobimfree}, if $G^\circ[\cX]$ contains $sC_t$ as an induced minor, so does $G$.

Thus, suppose that the algorithm from \cref{thm:mwis} returns an induced $\cH$-packing $\cS$ in $G^\circ[\cX]$.
It is straightforward to verify that $S = \bigcup \cS$ is an induced $\cH$-packing in $G$. Let us argue how its size compares to the optimum one.

Let $S^*$ be an optimum induced $\cH$-packing in $G$.
Let us construct another induced $\cH$-packing $S'$ as follows.
We consider every component of $G[S^*]$ separately, let $C$ be the vertex set of one such component.
If $|C| \leq c_\delta$, we set $C' = C$ and include this set in $S'$.
Otherwise, since $G[C]$ is weakly hyperfinite, there is a set $Q_C \subseteq C$ of size at most $\delta |C| = \varepsilon/2 \cdot |C|$, such that every component of $G[C \setminus Q_C]$ has at most $c_\delta$ vertices.
We set $C' = C \setminus Q_C$ and include it into $S'$.
Since $\cH$ is hereditary, it holds that  $G[C'] \in \cH$.
Note that in both cases we have $|C'| \geq  (1-\delta)|C|= (1-\varepsilon/2)|C|$. Consequently, we obtain
\[
	|S^*| = \sum_{C: \text{ component of } G[S^*]} |C| \leq \sum_{C: \text{ component of } G[S^*]} \frac{|C'|}{(1-\varepsilon/2)} = \frac{|S'|}{(1-\varepsilon/2)}.
\]
Now, by \cref{lem:useblobs} notice that the size of $S'$ is equal to the weight of a largest-weight independent set in $G^\circ[\cX]$.
Thus, by \cref{thm:mwis}, we have $|S| \geq (1-\varepsilon/2)|S'|$.
Summarizing, it holds that
\[
	|S| \geq (1-\varepsilon/2)|S'| \geq (1-\varepsilon/2)(1-\varepsilon/2) |S^*| = (1-\varepsilon + \varepsilon^2)|S^*| \geq (1-\varepsilon)|S^*|.
\]
This completes the proof.
\end{proof}

Now let us argue how \cref{thm:hyperfinite} implies \cref{thm:pack}.

\thmpack*
\begin{proof}
Note that if $\psi$ is \emph{consistent}, i.e., if there is any graph $H$ such that $H \models \psi$,
then, since $\psi$ is hereditary, we have $K_1 \subseteq \psi$.
Consequently, in such a case, every non-empty instance of $(\tw \leq r,\psi)$-\MWIS has a solution (for example, a single vertex).
Since $\psi$ is a fixed formula, we can verify if it is consistent in constant time.
If not, we return the third outcome.

So from now on let us assume that some solution exists.
It is known that graphs of treewidth at most $r$, for any fixed $r$, form a weakly hyperfinite class of graphs~\cite[Theorem 16.5]{DBLP:books/daglib/0030491}.
Furthermore, this class is closed under vertex deletion and disjoint unions.
Consequently, graphs of treewidth at most $r$ satisfying $\psi$ form a non-empty class closed under vertex deletion and disjoint unions.
Thus, we obtain \cref{thm:pack} as an immediate corollary of \cref{thm:hyperfinite}.
\end{proof}

\section{Conclusion}

Let us conclude the paper with pointing out two specific problems for further research.
First, it would be interesting to strengthen our \cref{thm:mwis} by solving the problem \emph{exactly} without significantly increasing the running time.

\begin{problem}
Show that for every fixed $s,t$, \MWIS can be solved in quasipolynomial time in $sC_t$-induced-minor-free graphs.
\end{problem}

Second, let us suggest the following possible extension of our \cref{thm:mwis}.
Our starting point was a quasipolynomial-time algorithm for $H$-induced-minor-free graphs and we aimed to extend this result for $H'$-induced-minor-free graphs, where every component of $H$ is isomorphic to $H$.
Perhaps this can be done in a more general setting?

\begin{problem}\label{prob:components}
Suppose that $H_1$ and $H_2$ are graphs such that \MWIS is (quasi)polynomial-time solvable in $H_i$-induced-minor-free graphs for $i \in \{1,2\}$.
Show a quasipolynomial-time algorithm (or at least a QPTAS) for \MWIS in $(H_1 + H_2)$-induced-minor-free graphs.
\end{problem}

We remark that the solution to Problem~\ref{prob:components} is known if instead of forbidding certain graphs as induced minors, we forbid them as induced subgraphs~\cite{DBLP:conf/focs/GartlandL20}.
However, the approach relies on the fact that induced subgraphs are constant-size objects, while an induced minor model of a constant-size graph might be arbitrarily large.

\bibliography{bibliography}

@article{grzesik2020polynomialtime,
  author       = {Andrzej Grzesik and
                  Tereza Klimo\v{s}ov{\'{a}} and
                  Marcin Pilipczuk and
                  Michał Pilipczuk},
  title        = {Polynomial-time Algorithm for Maximum Weight Independent Set on {$P_6$}-free
                  Graphs},
  journal      = {{ACM} Trans. Algorithms},
  volume       = {18},
  number       = {1},
  pages        = {4:1--4:57},
  year         = {2022},
  url          = {https://doi.org/10.1145/3414473},
  doi          = {10.1145/3414473},
  bibsource    = {dblp computer science bibliography, https://dblp.org}
}

@article{Tara,
  author       = {Tara Abrishami and
                  Maria Chudnovsky and
                  Marcin Pilipczuk and
                  Paweł Rzążewski and
                  Paul D. Seymour},
  title        = {Induced Subgraphs of Bounded Treewidth and the Container Method},
  journal      = {{SIAM} J. Comput.},
  volume       = {53},
  number       = {3},
  pages        = {624--647},
  year         = {2024},
  url          = {https://doi.org/10.1137/20m1383732},
  doi          = {10.1137/20M1383732},
  bibsource    = {dblp computer science bibliography, https://dblp.org}
}

@inproceedings{DBLP:conf/stoc/GartlandLMPPR24,
  author       = {Peter Gartland and
                  Daniel Lokshtanov and
                  Tom{\'{a}}s Masa\v{r}{\'{\i}}k and
                  Marcin Pilipczuk and
                  Michał Pilipczuk and
                  Paweł Rzążewski},
  editor       = {Bojan Mohar and
                  Igor Shinkar and
                  Ryan O'Donnell},
  title        = {Maximum Weight Independent Set in Graphs with no Long Claws in Quasi-Polynomial
                  Time},
  booktitle    = {Proceedings of the 56th Annual {ACM} Symposium on Theory of Computing,
                  {STOC} 2024, Vancouver, BC, Canada, June 24-28, 2024},
  pages        = {683--691},
  publisher    = {{ACM}},
  year         = {2024},
  url          = {https://doi.org/10.1145/3618260.3649791},
  doi          = {10.1145/3618260.3649791},
  bibsource    = {dblp computer science bibliography, https://dblp.org}
}

@inproceedings{Karp1972,
author="Karp, Richard M.",
editor="Miller, Raymond E.
and Thatcher, James W.
and Bohlinger, Jean D.",
title="Reducibility among Combinatorial Problems",
bookTitle="Complexity of Computer Computations: Proceedings of a symposium on the Complexity of Computer Computations, held March 20--22, 1972, at the IBM Thomas J. Watson Research Center, Yorktown Heights, New York, and sponsored by the Office of Naval Research, Mathematics Program, IBM World Trade Corporation, and the IBM Research Mathematical Sciences Department",
year="1972",
publisher="Springer US",
address="Boston, MA",
pages="85--103",
isbn="978-1-4684-2001-2",
doi="10.1007/978-1-4684-2001-2_9",
url="https://doi.org/10.1007/978-1-4684-2001-2_9"
}

@inproceedings{DBLP:conf/focs/GartlandL20,
  author       = {Peter Gartland and
                  Daniel Lokshtanov},
  editor       = {Sandy Irani},
  title        = {Independent Set on {$P_k$}-Free
                  Graphs in Quasi-Polynomial Time},
  booktitle    = {61st {IEEE} Annual Symposium on Foundations of Computer Science, {FOCS}
                  2020, Durham, NC, USA, November 16-19, 2020},
  pages        = {613--624},
  publisher    = {{IEEE}},
  year         = {2020},
  url          = {https://doi.org/10.1109/FOCS46700.2020.00063},
  doi          = {10.1109/FOCS46700.2020.00063},
  bibsource    = {dblp computer science bibliography, https://dblp.org}
}

@inproceedings{DBLP:conf/sosa/PilipczukPR21,
  author       = {Marcin Pilipczuk and
                  Michał Pilipczuk and
                  Paweł Rzążewski},
  editor       = {Hung Viet Le and
                  Valerie King},
  title        = {Quasi-polynomial-time algorithm for Independent Set in {$P_t$}-free
                  graphs via shrinking the space of induced paths},
  booktitle    = {4th Symposium on Simplicity in Algorithms, {SOSA} 2021, Virtual Conference,
                  January 11-12, 2021},
  pages        = {204--209},
  publisher    = {{SIAM}},
  year         = {2021},
  url          = {https://doi.org/10.1137/1.9781611976496.23},
  doi          = {10.1137/1.9781611976496.23},
  bibsource    = {dblp computer science bibliography, https://dblp.org}
}

@inproceedings{DBLP:conf/stoc/GartlandLPPR21,
  author       = {Peter Gartland and
                  Daniel Lokshtanov and
                  Marcin Pilipczuk and
                  Michał Pilipczuk and
                  Paweł Rzążewski},
  editor       = {Samir Khuller and
                  Virginia Vassilevska Williams},
  title        = {Finding large induced sparse subgraphs in {$C_{>t}$}-free graphs in quasipolynomial time},
  booktitle    = {{STOC} '21: 53rd Annual {ACM} {SIGACT} Symposium on Theory of Computing,
                  Virtual Event, Italy, June 21-25, 2021},
  pages        = {330--341},
  publisher    = {{ACM}},
  year         = {2021},
  url          = {https://doi.org/10.1145/3406325.3451034},
  doi          = {10.1145/3406325.3451034},
  bibsource    = {dblp computer science bibliography, https://dblp.org}
}

@inproceedings{DBLP:conf/soda/LokshantovVV14,
  author       = {Daniel Lokshtanov and
                  Martin Vatshelle and
                  Yngve Villanger},
  editor       = {Chandra Chekuri},
  title        = {Independent Set in {$P_5$}-Free Graphs in Polynomial
                  Time},
  booktitle    = {Proceedings of the Twenty-Fifth Annual {ACM-SIAM} Symposium on Discrete
                  Algorithms, {SODA} 2014, Portland, Oregon, USA, January 5-7, 2014},
  pages        = {570--581},
  publisher    = {{SIAM}},
  year         = {2014},
  url          = {https://doi.org/10.1137/1.9781611973402.43},
  doi          = {10.1137/1.9781611973402.43},
  bibsource    = {dblp computer science bibliography, https://dblp.org}
}

@inproceedings{DBLP:conf/soda/ChudnovskyMPPR24,
  author       = {Maria Chudnovsky and
                  Rose McCarty and
                  Marcin Pilipczuk and
                  Michał Pilipczuk and
                  Paweł Rzążewski},
  editor       = {David P. Woodruff},
  title        = {Sparse induced subgraphs in {$P_6$}-free graphs},
  booktitle    = {Proceedings of the 2024 {ACM-SIAM} Symposium on Discrete Algorithms,
                  {SODA} 2024, Alexandria, VA, USA, January 7-10, 2024},
  pages        = {5291--5299},
  publisher    = {{SIAM}},
  year         = {2024},
  url          = {https://doi.org/10.1137/1.9781611977912.190},
  doi          = {10.1137/1.9781611977912.190},
  bibsource    = {dblp computer science bibliography, https://dblp.org}
}

@article{DBLP:journals/siamdm/PaesaniPR22,
  author       = {Giacomo Paesani and
                  Dani{\"{e}}l Paulusma and
                  Pawe\l{} Rz\k{a}\.zewski},
  title        = {{Feedback Vertex Set} and {Even Cycle Transversal} for {$H$}-Free
                  Graphs: Finding Large Block Graphs},
  journal      = {{SIAM} J. Discret. Math.},
  volume       = {36},
  number       = {4},
  pages        = {2453--2472},
  year         = {2022},
  url          = {https://doi.org/10.1137/22m1468864},
  doi          = {10.1137/22m1468864},
  bibsource    = {dblp computer science bibliography, https://dblp.org}
}

@book{platypus,
  author       = {Marek Cygan and
                  Fedor V. Fomin and
                  Lukasz Kowalik and
                  Daniel Lokshtanov and
                  D{\'{a}}niel Marx and
                  Marcin Pilipczuk and
                  Michał Pilipczuk and
                  Saket Saurabh},
  title        = {Parameterized Algorithms},
  publisher    = {Springer},
  year         = {2015},
  url          = {https://doi.org/10.1007/978-3-319-21275-3},
  doi          = {10.1007/978-3-319-21275-3},
  isbn         = {978-3-319-21274-6},
  bibsource    = {dblp computer science bibliography, https://dblp.org}
}

@article{DBLP:journals/siamcomp/ChalermsookCKLM20,
  author       = {Parinya Chalermsook and
                  Marek Cygan and
                  Guy Kortsarz and
                  Bundit Laekhanukit and
                  Pasin Manurangsi and
                  Danupon Nanongkai and
                  Luca Trevisan},
  title        = {From {Gap-Exponential Time Hypothesis} to Fixed Parameter Tractable
                  Inapproximability: Clique, Dominating Set, and More},
  journal      = {{SIAM} J. Comput.},
  volume       = {49},
  number       = {4},
  pages        = {772--810},
  year         = {2020},
  url          = {https://doi.org/10.1137/18M1166869},
  doi          = {10.1137/18M1166869},
  bibsource    = {dblp computer science bibliography, https://dblp.org}
}

@inproceedings{DBLP:conf/focs/LinRSW23,
  author       = {Bingkai Lin and
                  Xuandi Ren and
                  Yican Sun and
                  Xiuhan Wang},
  title        = {Improved Hardness of Approximating $k$-{Clique} under {ETH}},
  booktitle    = {64th {IEEE} Annual Symposium on Foundations of Computer Science, {FOCS}
                  2023, Santa Cruz, CA, USA, November 6-9, 2023},
  pages        = {285--306},
  publisher    = {{IEEE}},
  year         = {2023},
  url          = {https://doi.org/10.1109/FOCS57990.2023.00025},
  doi          = {10.1109/FOCS57990.2023.00025},
  bibsource    = {dblp computer science bibliography, https://dblp.org}
}

@INPROCEEDINGS{Hastad96cliqueis,
    author = {Johan H{\aa}stad},
    title = {Clique is hard to approximate within $n^{{(1-\epsilon)}}$},
    booktitle = {Acta Mathematica},
    year = {1996},
    pages = {627--636},
    publisher = {}
}

@article{GAREY1976237,
title = "Some simplified {NP}-complete graph problems",
journal = "Theoretical Computer Science",
volume = "1",
number = "3",
pages = "237--267",
year = "1976",
issn = "0304-3975",
doi = "https://doi.org/10.1016/0304-3975(76)90059-1",
url = "http://www.sciencedirect.com/science/article/pii/0304397576900591",
author = "M.R. Garey and D.S. Johnson and L. Stockmeyer",
}

@article{alekseev1982effect,
  title={The effect of local constraints on the complexity of determination of the graph independence number},
  author={Alekseev, Vladimir E.},
  journal={Combinatorial-algebraic methods in applied mathematics},
  pages={3--13},
  year={1982}
}

@article{SBIHI198053,
title = "Algorithme de recherche d'un stable de cardinalite maximum dans un graphe sans etoile",
journal = "Discrete Mathematics",
volume = "29",
number = "1",
pages = "53--76",
year = "1980",
issn = "0012-365X",
doi = "https://doi.org/10.1016/0012-365X(90)90287-R",
_url = "http://www.sciencedirect.com/science/article/pii/0012365X9090287R",
author = "Najiba Sbihi",
}

@article{MINTY1980284,
title = "On maximal independent sets of vertices in claw-free graphs",
journal = "Journal of Combinatorial Theory, Series B",
volume = "28",
number = "3",
pages = "284--304",
year = "1980",
issn = "0095-8956",
doi = "https://doi.org/10.1016/0095-8956(80)90074-X",
_url = "http://www.sciencedirect.com/science/article/pii/009589568090074X",
author = "George J. Minty",
}

@article{DBLP:journals/jda/LozinM08,
  author    = {Vadim V. Lozin and
               Martin Milani\v{c}},
  title     = {A polynomial algorithm to find an independent set of maximum weight
               in a fork-free graph},
  journal   = {J. Discrete Algorithms},
  volume    = {6},
  number    = {4},
  pages     = {595--604},
  year      = {2008},
  url       = {https://doi.org/10.1016/j.jda.2008.04.001},
  doi       = {10.1016/j.jda.2008.04.001},
  bibsource = {dblp computer science bibliography, https://dblp.org}
}

@article{ALEKSEEV20043,
title = "Polynomial algorithm for finding the largest independent sets in graphs without forks",
journal = "Discrete Applied Mathematics",
volume = "135",
number = "1",
pages = "3--16",
year = "2004",
note = "Russian Translations II",
issn = "0166-218X",
doi = "10.1016/S0166-218X(02)00290-1",
_url = "http://www.sciencedirect.com/science/article/pii/S0166218X02002901",
  author={Alekseev, Vladimir E.},
}

@article{DBLP:journals/dm/Munaro17,
  author       = {Andrea Munaro},
  title        = {On line graphs of subcubic triangle-free graphs},
  journal      = {Discret. Math.},
  volume       = {340},
  number       = {6},
  pages        = {1210--1226},
  year         = {2017},
  url          = {https://doi.org/10.1016/j.disc.2017.01.006},
  doi          = {10.1016/J.DISC.2017.01.006},
  bibsource    = {dblp computer science bibliography, https://dblp.org}
}

@article{Po74,
author={Svatopluk Poljak},
title={A note on stable sets and colorings of graphs},
journal={Commentationes Mathematicae Universitatis Carolinae},
volume={15},
issue={2},
pages={307--309},
year={1974}
}

@book{DBLP:books/daglib/0030491,
  author       = {Jaroslav Ne\v{s}et\v{r}il and
                  Patrice Ossona de Mendez},
  title        = {Sparsity - Graphs, Structures, and Algorithms},
  series       = {Algorithms and combinatorics},
  volume       = {28},
  publisher    = {Springer},
  year         = {2012},
  url          = {https://doi.org/10.1007/978-3-642-27875-4},
  doi          = {10.1007/978-3-642-27875-4},
  isbn         = {978-3-642-27874-7},
  bibsource    = {dblp computer science bibliography, https://dblp.org}
}

@phdthesis{GartlandThesis,
  title        = {Quasi-Polynomial Time Techniques for Independent Set and Beyond in Hereditary Graph Classes},
  author       = {Peter Gartland},
  year         = 2023,  
  school       = {University of California Santa Barbara},
  type         = {{PhD thesis}}
}

@article{DBLP:journals/siamcomp/ChudnovskyPPT24,
  author       = {Maria Chudnovsky and
                  Marcin Pilipczuk and
                  Michal Pilipczuk and
                  St{\'{e}}phan Thomass{\'{e}}},
  title        = {Quasi-Polynomial Time Approximation Schemes for the Maximum Weight
                  Independent Set Problem in {$H$}-Free
                  Graphs},
  journal      = {{SIAM} J. Comput.},
  volume       = {53},
  number       = {1},
  pages        = {47--86},
  year         = {2024},
  url          = {https://doi.org/10.1137/20m1333778},
  doi          = {10.1137/20M1333778},
  bibsource    = {dblp computer science bibliography, https://dblp.org}
}

@article{DBLP:journals/algorithmica/BonnetDGTW26,
  author       = {{\'{E}}douard Bonnet and
                  Julien Duron and
                  Colin Geniet and
                  St{\'{e}}phan Thomass{\'{e}} and
                  Alexandra Wesolek},
  title        = {Maximum Independent Set when Excluding an Induced Minor: {$K_1 + tK_2$} and {$tC_3 \uplus C_4$}},
  journal      = {Algorithmica},
  volume       = {88},
  number       = {1},
  pages        = {16},
  year         = {2026},
  url          = {https://doi.org/10.1007/s00453-025-01356-2},
  doi          = {10.1007/S00453-025-01356-2},
  bibsource    = {dblp computer science bibliography, https://dblp.org}
}

@inproceedings{DBLP:conf/soda/BonamyBDEGHTW23,
  author       = {Marthe Bonamy and
                  {\'{E}}douard Bonnet and
                  Hugues D{\'{e}}pr{\'{e}}s and
                  Louis Esperet and
                  Colin Geniet and
                  Claire Hilaire and
                  St{\'{e}}phan Thomass{\'{e}} and
                  Alexandra Wesolek},
  editor       = {Nikhil Bansal and
                  Viswanath Nagarajan},
  title        = {Sparse graphs with bounded induced cycle packing number have logarithmic
                  treewidth},
  booktitle    = {Proceedings of the 2023 {ACM-SIAM} Symposium on Discrete Algorithms,
                  {SODA} 2023, Florence, Italy, January 22-25, 2023},
  pages        = {3006--3028},
  publisher    = {{SIAM}},
  year         = {2023},
  url          = {https://doi.org/10.1137/1.9781611977554.ch116},
  doi          = {10.1137/1.9781611977554.CH116},
  bibsource    = {dblp computer science bibliography, https://dblp.org}
}

@article{DBLP:journals/dam/BrandstadtM18a,
  author       = {Andreas Brandst{\"{a}}dt and
                  Raffaele Mosca},
  title        = {Maximum weight independent set for {$\ell$}claw-free graphs
                  in polynomial time},
  journal      = {Discret. Appl. Math.},
  volume       = {237},
  pages        = {57--64},
  year         = {2018},
  url          = {https://doi.org/10.1016/j.dam.2017.11.029},
  doi          = {10.1016/J.DAM.2017.11.029},
  bibsource    = {dblp computer science bibliography, https://dblp.org}
}

@article{dallard2022secondpaper,
  author       =  {Cl{\'{e}}ment Dallard and
                  Martin Milani{\v{c}} and
                  Kenny {\v{S}}torgel},
  title = {Treewidth versus clique number. {III}. {T}ree-independence number of graphs with a forbidden structure},
  journal = {Journal of Combinatorial Theory, Series B},
  volume = {167},
  pages = {338--391},
  year = {2024},
  doi = {10.1016/j.jctb.2024.03.005},
}

\end{document}